\newtheorem{theorem}{Theorem}
\theoremstyle{definition} %% Define the style of environment
\newtheorem{definition}{Definition}
\newtheorem{lemma}{Lemma}
\newtheorem{proposition}{Proposition}
\newtheorem{remark}{Remark}
\newtheorem{assumption}{Assumption}
\newenvironment{breakablealgorithm}
  {% \begin{breakablealgorithm}
   \begin{center}
     \refstepcounter{algorithm}% New algorithm
     \hrule height.8pt depth0pt \kern2pt% \@fs@pre for \@fs@ruled
     \renewcommand{\caption}[2][\relax]{% Make a new \caption
       {\raggedright\textbf{Smart enumeration algorithm (SEA)\thealgorithm} ##2\par}%
       \ifx\relax##1\relax % #1 is \relax
         \addcontentsline{loa}{algorithm}{\protect\numberline{\thealgorithm}##2}%
       \else % #1 is not \relax
         \addcontentsline{loa}{algorithm}{\protect\numberline{\thealgorithm}##1}%
       \fi
       \kern2pt\hrule\kern2pt
     }
  }{% \end{breakablealgorithm}
     \kern2pt\hrule\relax% \@fs@post for \@fs@ruled
   \end{center}
  }
\let\NAT@parse\undefined
\def\BibTeX{{\rm B\kern-.05em{\sc i\kern-.025em b}\kern-.08em
    T\kern-.1667em\lower.7ex\hbox{E}\kern-.125emX}}
\begin{document}
\title{An Online Cross-layered Defense Strategy with Bandwidth Allocation for Multi-channel Systems under DoS Attacks}

\author{Liheng~Wan,~Panshuo~Li,~James~Lam% <-this % stops a space
\thanks{The work was partially supported by the National Natural Science Foundation of China under Grants 62473101, 62121004 and 62273286, Committee on Research and Conference Grants under Grant 2202100888, and Guangzhou Basic and Applied Basic Research Foundation under Grant SL2024A04J00332. (Corresponding authors: Panshuo Li; James Lam)}% <-this % stops a space
\thanks{L.~Wan and P.~Li are with the School of Automation and the Guangdong Provincial Key Laboratory of Intelligent Decision and Cooperative Control, Guangdong University of Technology, and also with the Guangdong-Hong Kong Joint Laboratory for Intelligent Decision and Cooperative Control, Guangzhou 510006, China (e-mail: alexwan2020@163.com; panshuoli812@gdut.edu.cn)}
\thanks{J.~Lam is with the Department of Mechanical Engineering, The University of Hong Kong, Hong Kong, and also with the Guangdong-Hong Kong Joint Laboratory for Intelligent Decision and Cooperative Control, Guangzhou 510006, China (e-mail: james.lam@hku.hk)}% <-this % stops a space
}

\maketitle

\begin{abstract}
This paper proposes an online cross-layered defense strategy for multi-channel systems with switched dynamics under DoS attacks. The enabling condition of a channel under attacks is formulated with respect to attack flow and channel bandwidth, then a new networked control system model bridging the gap between system dynamics and network deployment is built. Based on this, the cross-layered defense strategy is proposed. It jointly optimizes the controller gain and bandwidth allocation of channels according to the real-time attack flow and system dynamics, by solving a mixed-integer semidefinite programming online. A smart enumeration algorithm for non-convex bi-level optimization is proposed to analyze the stability under the strategy. Numerical examples are given to illustrate the high resilience from the cross-layered feature.
\end{abstract}

\begin{IEEEkeywords}
Cross-layered defense strategy, DoS attacks, mixed-integer semidefinite programming, online optimization, smart enumeration
\end{IEEEkeywords}

\section{Introduction}
Modern industrial systems contain various variables, such as temperature and humidity, which are of concern to researchers and engineers. These variables are measured by different sensors and transmitted through different channels respectively, which brings the rapid development of network communication. However, the introduction of networks poses the risk of DoS attacks, which are highly destructive with a simple mechanism \cite{network}: The attacker inputs attack flow overwhelming the bandwidth of a channel to jam the channel buffer, which causes the dropouts of incoming packets.

For networked control systems, existing studies \cite{review-Annual-2019,review-CJAS-2022,review-Annual-2023} tend to formulate the temporal distribution of attack-induced packet dropouts directly. In \cite{res-Italy-2015}, the intensity of DoS attacks is characterized with constraints on frequency and duration of attacks in any time interval, which is the mainstream framework of the researches on DoS attacks in recent years \cite{deter-Wang-2022,deter-Zhao-2022,deter-Xu-2024}. Furthermore, the attack intensity can be described with the maximum number of consecutive attacks, especially in the predictive control scheme \cite{active-2021,active-2023}. Apart from the above deterministic formulations, the packet dropouts sequence under DoS attacks can be modeled as a Bernoulli process \cite{Bern-Li-2024,Bern-TAC-2025} or a Markov chain \cite{res-Sun-2017,Markov-Xue-2025}.

Based on the above models, the controller design under DoS attacks is widely studied. Resilient stabilizing control with a sampling rate adapting to attack intensity is studied in \cite{res-Italy-2015}. And the controller gain adapting to the number of consecutive packet dropouts is explored in \cite{res-Sun-2017}. Moreover, the predictive control is investigated in \cite{active-2021,active-2023} to update control input at instants under attacks by utilizing the previous sampled data. Furthermore, considering the switched system dynamics, the control for switched systems under DoS attacks is studied in \cite{deter-Wang-2022,deter-Zhao-2022,active-2023}. On the other hand, the optimal DoS attack scheduling is studied in \cite{Optimal_DoS_1} and \cite{Optimal_DoS_2}.

One may observe that tremendous results have been reported on the controller design under DoS attacks. However, recalling the nature of DoS attacks mentioned earlier, there appears to be significant potential for performance enhancement through the synthesis of network deployment and controller design. However, there is only a scarcity of research dedicated to this area. In \cite{bandwidth-limit-conf-2018,bandwidth-limit-Hossain-2022}, considering a group of interconnected systems, bandwidth is allocated according to whether the system is subject to DoS attacks or not. The impact of bandwidth allocation on the input delay under redundancy control is studied in \cite{bandwidth-redundant-Hu-2021}. It is worth mentioning that the dynamic characteristics of systems are not utilized for bandwidth allocation in the above-mentioned works. The bandwidth allocation and controller gain are regulated sequentially rather than being jointly optimized towards the global optimum.

Motivated by the above discussions, we consider a multi-channel communication framework, where both the attacker and the defender have limited resources. The novelties and contributions can be summarized as:

\begin{enumerate}
\renewcommand{\labelenumi}{\theenumi)}
\item A state-space model embedded by a logical model is built to bridge the gap between network deployment and system dynamics. The logical model characterizes the dynamics of channels' on-off state under attack flow.
\item An online cross-layered defense strategy is proposed to jointly optimize the bandwidth allocation and controller gain to stabilize the system under attacks according to the real-time attack flow and system dynamics, by solving a mixed-integer semidefinite programming online.
\item A smart enumeration algorithm (SEA) is proposed to solve a non-convex bi-level mixed-integer semidefinite programming exactly. The result is employed to analyze the stability under the proposed defense strategy.
\end{enumerate}

\textbf{Notations: }We use $A\backslash B$ to denote the set $\{x\in A:x\notin B\}$, and $\{a,\ldots,b\}$ to denote the discrete-time interval $\{a,a+1,a+2,\ldots,b-1,b\}$. Define $\|\cdot\|$ as the Euclidean vector norm. We use $\mathbf{1}_{m\times n}$ to represent an $m$-by-$n$ matrix of which the entries equal $1$, and $\mathrm{diag}(v)$ to represent a diagonal matrix with a main diagonal vector $v$. We denote $\bar{\lambda}(\cdot)$ as the largest eigenvalue of a real symmetric matrix, and $\underline{\lambda}(\cdot)$ as the smallest one. Moreover, we use $P\succ 0$ ($P\succeq 0$) to indicate that $P$ is a real symmetric and positive (semi)definite matrix, and $A=[a_{ij}]_{m\times n}>0$ ($A\geq 0$) to indicate that each entry $a_{ij}>0$ ($a_{ij}\geq 0$). The symbol `*' denotes the ellipsis entry of a symmetric matrix.

\section{Preliminaries and problem formulation}
\subsection{System framework}\label{section-system-framework}
To illustrate the strategy's adaptability to system dynamics, periodic piecewise linear systems (PPLSs) \cite{PPLS-Auto-2015,PPLS-Auto-2018,PPLS-TAC-2019,PPLS-JFI-2022,PPLS-almost-2023}, a special type of switched systems whose system modes switch cyclically with fixed dwell-times, are taken as the controlled plant. Consider a multi-channel PPLS composed of $s$ subsystems. For $k \in \{\ell T+k_{i-1},\ell T+k_{i-1}+1,\ldots,\ell T+k_{i}-1\}$, $\ell\in\{0,1,\ldots\}$, $i \in \mathcal{S} \triangleq\{1,2,\ldots,s\}$,
\begin{equation}\label{PPLS}
\begin{aligned}
x(k+1)&=A_{i}x(k)+B_{i}u(k),\\
u(k)&=K_{i}(k)L_{i}(k)x(k),\\
L_{i}(k)&=\mathrm{diag}\Big(l_{i}^{(1)}(k),l_{i}^{(2)}(k),\ldots,l_{i}^{(n)}(k)\Big).
\end{aligned}
\end{equation}
where
\begin{equation}
l_{i}^{(j)}(k)\triangleq
\begin{cases}
0, & \text{the channel of $x^{(j)}$ is jammed at time $k$,}\\
1, & \text{otherwise.}
\end{cases}\\
\end{equation}
The channels' state of the $i$th subsystem $(A_{i}, B_{i})$ is represented by $L_{i}(k)$. The system state is denoted by $x(k)=[x^{(1)}(k),x^{(2)}(k),\ldots,x^{(n)}(k)]^{\mathrm{T}}\in \mathbb{R}^{n}$ of which different entries $x^{(j)}(k)$ are measured by different sensors and transmitted through different channels, respectively. Superscript $(j)$ represents both the index of state entries and that of channels. And $u(k)\in \mathbb{R}^{n_u}$ is the control input. Denote $T$ as the fundamental period of the system, $\ell T+k_{i}$ the switching instant when the $i$th subsystem switches to the $(i+1)$th in the $(\ell+1)$th period. Given $k_{0}=0$. The dwell-time of the $i$th subsystem is $T_{i}=k_{i}-k_{i-1}$, and it holds $T=\sum_{i=1}^{s}T_{i}$.

\begin{figure}[htb]
\centering
\includegraphics[width=.45\textwidth]{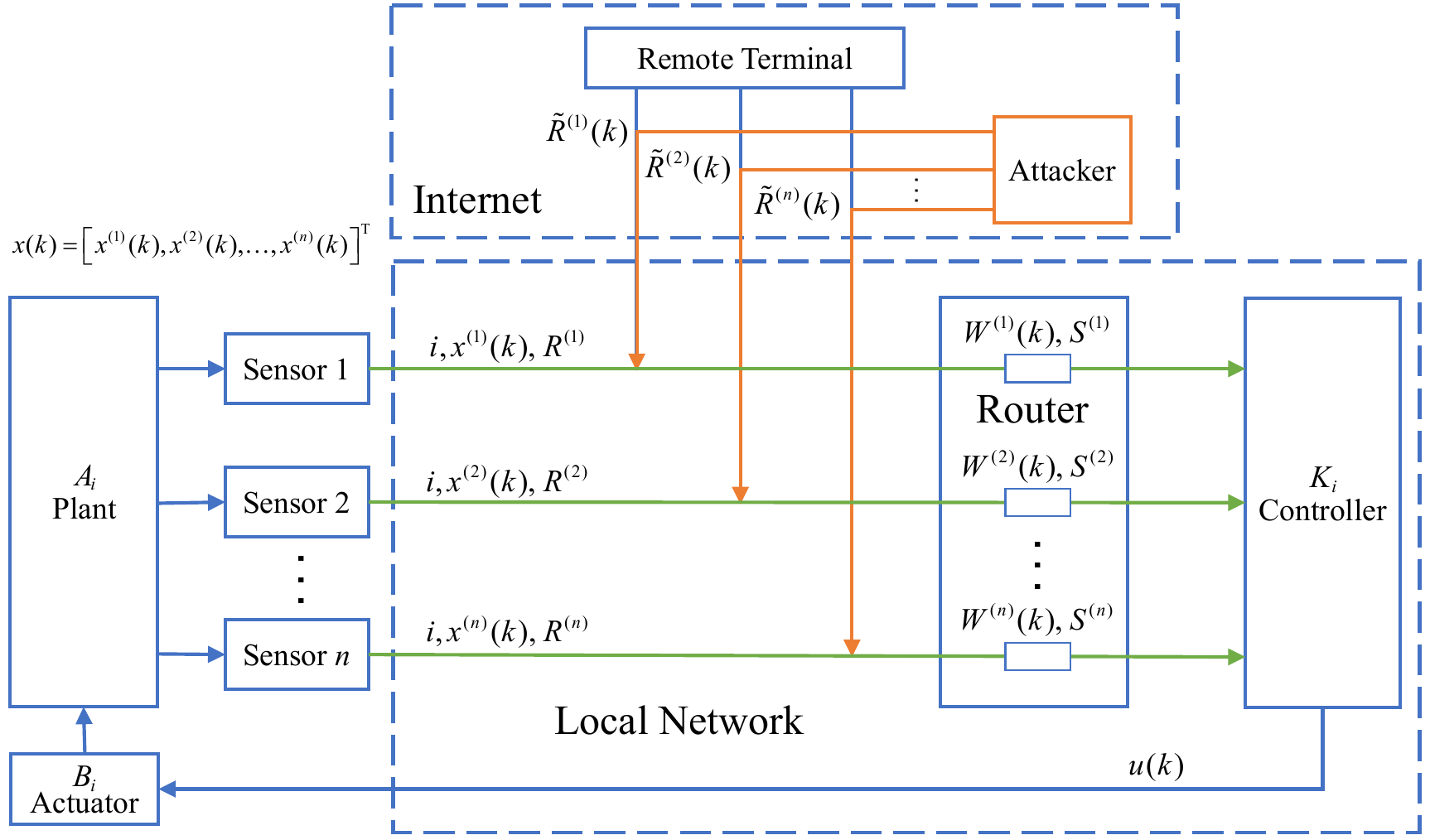}
\caption{Illustration of multi-channel networked control systems}
\label{network}
\end{figure}

As shown in Fig.~\ref{network}, the communication network can be divided as two parts:
\begin{enumerate}\renewcommand{\labelenumi}{\theenumi)}
\item In the Internet, the remote terminal interacts with the controlled plant via the branch channels of the sampling channels (the green lines in Fig.~\ref{network}). However, the branch channels would be also utilized by the attacker in the Internet to input DoS attack flow into the sampling channels.
\item In the local network, the state entries $x^{(j)}(k)$ and system mode $i$ are measured by different sensors, and transmitted in packets through different sampling channels towards the router, respectively. Then, the router transfers the received packets towards the controller. The control input $u(k)$ is transmitted through the reliable channels free from attacks.
\end{enumerate}

\subsection{Enabling condition of sampling channels}\label{section-enable}
As shown in (\ref{PPLS}), the channels' state $L_{i}(k)$ affects the dynamics of the multi-channel system directly. Hence, it is necessary to discuss the enabling condition of sampling channels under DoS attacks. Firstly, some critical terms need to be clarified.
\begin{itemize}
  \item \textbf{Transmission rate:} The number of bits that can be transmitted per unit time.
  \item \textbf{Flow:} The transmission rate of input data. \textbf{Normal flow} $R=[R^{(1)},R^{(2)},\ldots,R^{(n)}]$ is the transmission rate of the packets loaded with state entries and system modes, where $R^{(j)}$ is the normal flow for $x^{(j)}(k)$ and system mode $i$ through channel $j$. \textbf{Attack flow} $\tilde{R}(k)=[\tilde{R}^{(1)}(k),\tilde{R}^{(2)}(k),\ldots,\tilde{R}^{(n)}(k)]$ is the transmission rate of malicious packets at time $k$, where $\tilde{R}^{(j)}(k)$ is the attack flow through channel $j$. Define $R+\tilde{R}(k)$ as \textbf{input flow.} For the defender, $R$ is a known parameter, input flow $R+\tilde{R}(k)$ can be detected in real-time, hence the time-varying attack flow $\tilde{R}(k)$ is accessible at each time $k$.
  \item \textbf{Bandwidth:} The maximum outgoing transmission rate of a channel. Under the limited total available bandwidth $W_{\Sigma}$ such that $0<\sum_{j=1}^{n}W^{(j)}(k)\leq W_{\Sigma}$, the \textbf{bandwidth allocation} is denoted by $W(k)=[W^{(1)}(k),W^{(2)}(k),\ldots,W^{(n)}(k)]$ where $W^{(j)}(k)$ is the bandwidth allocated by the router to channel $j$ at time $k$.
  \item \textbf{Buffer:} The space used to store the packets that cannot be transferred instantly by the router when input flow overwhelms bandwidth. We use $S^{(j)}$ to denote the buffer of channel $j$.
\end{itemize}
Regarding the capacity of the attacker and that of the defender, some assumptions are given as follows considering practical capacity.
\begin{assumption}\label{assump-flow}
Both the total available attack flow and the attack flow for an individual channel are bounded, that is,
\begin{equation}\label{attack-capacity}
0\leq\sum_{j=1}^{n}\tilde{R}^{(j)}(k)\leq \tilde{R}_{\Sigma},\ 0\leq\tilde{R}^{(j)}(k)\leq \bar{R}^{(j)},
\end{equation}
where $\tilde{R}^{(j)}(k)$ and $\bar{R}^{(j)}$ are known to the defender.
\end{assumption}

\begin{assumption}\label{PPLS-assump-intensity}
The total attack duration $\tilde{T}_{\ell,i}$ during a dwell-time $\{\ell T+k_{i-1},\ldots,\ell T+k_{i-1}+T_{i}-1\}$ of the $i$th subsystem is upper-bounded by $\tilde{T}_{i}$. That is, for $\ell\in\{0,1,\ldots\}$, $i=\{1,2,\ldots,s\}$, it holds $\tilde{T}_{\ell,i}\leq \tilde{T}_{i}\leq T_{i}$. Define $\delta_{i}\triangleq\frac{\tilde{T}_{i}}{T_{i}}$ as the upper bound of attack duration ratio for the $i$th subsystem.
\end{assumption}

\begin{assumption}\label{assump-one}
For any $j\in \{1,2,\ldots,n\}$, the total available bandwidth $W_{\Sigma}$ satisfies $W_{\Sigma}>\bar{R}^{(j)}+R^{(j)}$.
\end{assumption}

Then with Fig.~\ref{network}, the impact of DoS attacks on packet transmission can be demonstrated: The attacker inputs overwhelming attack flow $\tilde{R}^{(j)}(k)$ into the channel. If the input flow $R^{(j)}+\tilde{R}^{(j)}(k)$ overwhelms $W^{(j)}(k)$ persistently, the buffer will get jammed and the incoming packets of $x^{(j)}(k)$ will be rejected by the router, that is, the packet dropouts of $x^{(j)}(k)$ occur \cite{network}. Hence, the enabling conditions can be given as follows:
\begin{enumerate}\renewcommand{\labelenumi}{\theenumi)}
\item Allocation condition: The enabled channel should be allocated with enough bandwidth to mitigate the attack flow. It can be established as
    \begin{equation}\label{PPLS-cond-allocation}
    W^{(j)}(k)\geq R^{(j)}+\tilde{R}^{(j)}(k).
    \end{equation}
\item Delay condition: The buffer of the channel should not have been jammed until the newly allocated bandwidth get configured. It can be formulated as
    \begin{equation}\label{PPLS-cond-delay}
    R^{(j)}+\tilde{R}^{(j)}(k)-W^{(j)}(k-1)< \frac{S^{(j)}}{\tau},
    \end{equation}
    where $\tau$ is allocation delay much smaller than sampling period $T_{d}$. Converting the unit discrete-time into a continuous-time sampling period $T_{d}$ as Fig.~\ref{period}, there exists allocation delay $\tau$ laid in the start of the sampling period. At time $t_{k}$, the buffer is cleared up since the packets of $x(t_{k-1})$ are no longer useful. Bandwidth remains $W^{(j)}(k-1)$ in $[t_{k},t_{k}+\tau]$, and $W^{(j)}(k)$ is activated in $[t_{k}+\tau,t_{k}+T_{d}]$. Delay condition (\ref{PPLS-cond-delay}) guarantees that the buffer would not be jammed in $[t_{k},t_{k}+\tau]$ under $W^{(j)}(k-1)$. And allocation condition (\ref{PPLS-cond-allocation}) implies that the free space in buffer is non-decreasing in $[t_{k}+\tau,t_{k}+T_{d}]$ under $W^{(j)}(k)$.
\end{enumerate}
To sum up, the enabling condition can be formulated as
\begin{equation}\label{PPLS-cond}
l_{i}^{(j)}(k)=\begin{cases}
1, & R^{(j)}+\tilde{R}^{(j)}(k)-W^{(j)}(k-1)< \frac{S^{(j)}}{\tau}\\
  & \text{and}\ W^{(j)}(k)\geq R^{(j)}+\tilde{R}^{(j)}(k),\\
0, & \text{otherwise}.
\end{cases}
\end{equation}

\begin{figure}[htb]
\centering
\includegraphics[width=.45\textwidth]{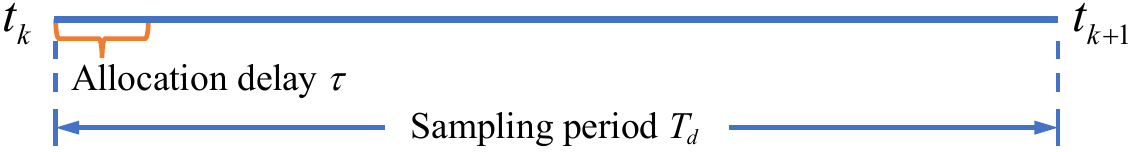}
\caption{Illustration of sampling period}
\label{period}
\end{figure}

\begin{remark}\label{remark-one}
Under Assumption \ref{assump-one}, at least one sampling channel can be enabled under DoS attacks. As mentioned in Section \ref{section-system-framework}, system mode $i$ is transmitted duplicately through all the sampling channels. Hence, system mode $i$ can be transmitted successfully through at least one channel.
\end{remark}

\section{Online cross-layered defense strategy}\label{section-opt}
In this section, the online cross-layered defense strategy is proposed. It optimizes the bandwidth allocation $W(k)$ and controller gain $K_{i}(k)$ jointly according to the detectable real-time attack flow $\tilde{R}(k)$ and system mode $i$. The online optimization is formulated as a mixed-integer semidefinite programming.
\subsection{Online cross-layered optimization}
For a control system, stability and convergence rate are most concerned. Motivated by it, exponential stability is studied in this paper.

\begin{definition}\cite{exponential_stability}
System (\ref{PPLS}) is said to be $\chi$-exponentially stable, if there exist some scalars $c>0 $, $0<\chi<1$, such that the system state satisfies $\|x(k)\|\leq c \chi^{k-k_{0}}\left\|x\left(k_{0}\right)\right\|$, $k\geq k_{0}$.
\end{definition}

For stability analysis, a Lyapunov function with a time-varying Lyapunov matrix \cite{PPLS-JFI-2022} is constructed as
\begin{equation}\label{PPLS-Lya}
\begin{aligned}
  V(k)&=V_{i}(k)=x^{\mathrm{T}}(k)P_{i}(k)x(k),
  \\P_{i}(k)&=P_{i-1} + \frac{k-\ell T - k_{i-1}}{T_{i}}(P_{i}-P_{i-1}),
\end{aligned}
\end{equation}
where $k\in \{\ell T+k_{i-1},\ldots,\ell T+k_{i}-1\}$, $P_{s}=P_{0}$, $P_{s+1}=P_{1}$.

The exponential order of the Lyapunov function is taken as the objective value for the online optimization detailed later. The following lemmas discuss the exponential order of $V(k)$ in the absence and presence of DoS attacks, respectively.

\begin{lemma}\label{PPLS-lemma-alpha}
Consider system (\ref{PPLS}) in the absence of attacks ($L_{i}(k)=I$), given $\alpha_{i}>0$, $i=1,2,\ldots,s$. If there exist matrices $P_{i}\succ 0$ and $G_{i}$, $Y_{i}$ with appropriate dimensions, such that
\begin{equation}\label{PPLS-lemma-alpha-LMI-1}
\begin{bmatrix}
 \frac{T_{i}+1}{T_{i}}\alpha_{i}(P_{i-1}^{-1}-G_{i}^{\mathrm{T}}-G_{i})  & * & *\\
 A_{i}G_{i}+B_{i}Y_{i} & -P_{i-1}^{-1} & *\\
 G_{i} & 0 & -\frac{T_{i}}{\alpha_{i}}P_{i}^{-1}
\end{bmatrix}\prec 0,
\end{equation}
\begin{equation}\label{PPLS-lemma-alpha-LMI-2}
\begin{bmatrix}
 \frac{T_{i}-1}{T_{i}}\alpha_{i}P_{i}^{-1}+\frac{\alpha_{i}}{T_{i}}P_{i-1}^{-1}-\alpha_{i}(G_{i}^{\mathrm{T}}+G_{i}) & *\\
 A_{i}G_{i}+B_{i}Y_{i} & -P_{i}^{-1}
\end{bmatrix}\prec 0,
\end{equation}
then it holds $V_{i}(k+1)\leq \alpha_{i}V_{i}(k)$ under controller gain $K_{i}=Y_{i}G_{i}^{-1}$.\hfill $\Box$
\end{lemma}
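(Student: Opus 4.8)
The plan is to fix the closed-loop dynamics, reduce the Lyapunov-decrease claim to a single matrix inequality that is affine in the elapsed time within the dwell, and then certify that inequality at two extreme time indices using the slack matrix $G_i$. With $L_i(k)=I$ the loop is $x(k+1)=\Phi_i x(k)$, $\Phi_i\triangleq A_i+B_iK_i$. Negative definiteness of the leading block of either LMI forces $G_i+G_i^{\mathrm{T}}\succ 0$, so $G_i$ is nonsingular, $K_i=Y_iG_i^{-1}$ is well defined, and $A_iG_i+B_iY_i=\Phi_iG_i$. Writing $m=k-\ell T-k_{i-1}\in\{0,\dots,T_i-1\}$, we have $P_i(k)=P_{i-1}+\tfrac{m}{T_i}(P_i-P_{i-1})$ and, uniformly over the dwell, $P(k+1)=P_{i-1}+\tfrac{m+1}{T_i}(P_i-P_{i-1})$ (at $m=T_i-1$ this equals $P_i$, consistent with the switch since $P$ is continuous). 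Hence $V_i(k+1)\le\alpha_iV_i(k)$ for every state is equivalent to $f(m)\triangleq\Phi_i^{\mathrm{T}}P(k+1)\Phi_i-\alpha_iP_i(k)\preceq 0$ for all admissible $m$, and since $f$ is affine in $m$ it suffices to certify $f$ at the endpoints of any interval containing $\{0,\dots,T_i-1\}$.

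Next I would convert each LMI into one such endpoint inequality. For (\ref{PPLS-lemma-alpha-LMI-1}) I apply the congruence $\mathrm{diag}(G_i^{-1},I,I)$, which turns $A_iG_i+B_iY_i$ into $\Phi_i$ and the $(3,1)$ block $G_i$ into $I$. The completion-of-squares bound $G_i^{-\mathrm{T}}\Theta G_i^{-1}-G_i^{-1}-G_i^{-\mathrm{T}}\succeq-\Theta^{-1}$ (valid for any $\Theta\succ 0$), taken with $\Theta=P_{i-1}^{-1}$, lower-bounds the $(1,1)$ block by $-\tfrac{T_i+1}{T_i}\alpha_iP_{i-1}$, and two Schur complements on the $-P_{i-1}^{-1}$ and $-\tfrac{T_i}{\alpha_i}P_i^{-1}$ blocks recover $\Phi_i^{\mathrm{T}}P_{i-1}\Phi_i+\tfrac{\alpha_i}{T_i}P_i-\tfrac{T_i+1}{T_i}\alpha_iP_{i-1}\prec 0$, which is exactly $f(-1)\prec 0$. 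For (\ref{PPLS-lemma-alpha-LMI-2}) the same congruence, the same slack bound with $\Theta=\tfrac{T_i-1}{T_i}P_i^{-1}+\tfrac{1}{T_i}P_{i-1}^{-1}$, and one Schur complement on $-P_i^{-1}$ yield $\Phi_i^{\mathrm{T}}P_i\Phi_i\prec\alpha_i\Theta^{-1}$; here the extra ingredient is operator convexity of inversion (the harmonic--arithmetic mean inequality) $\Theta^{-1}\preceq\tfrac{T_i-1}{T_i}P_i+\tfrac{1}{T_i}P_{i-1}$, which upgrades this to $\Phi_i^{\mathrm{T}}P_i\Phi_i\prec\alpha_i\bigl(\tfrac{1}{T_i}P_{i-1}+\tfrac{T_i-1}{T_i}P_i\bigr)$, i.e. $f(T_i-1)\prec 0$.

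To close, since $\{0,\dots,T_i-1\}\subset[-1,T_i-1]$ and $f$ is affine in $m$, the certified endpoints $f(-1)\prec 0$ and $f(T_i-1)\prec 0$ give $f(m)\prec 0$, hence $f(m)\preceq 0$, for every $m$ in the dwell-time. Substituting $x(k)$ then gives $V_i(k+1)\le\alpha_iV_i(k)$ under $K_i=Y_iG_i^{-1}$.

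I expect the main obstacle to be the passage from the inverse variables $P_i^{-1}$ appearing in the LMIs back to the primal Lyapunov matrices in $f(m)$: engineering the slack-variable elimination so that the coefficients $\tfrac{T_i+1}{T_i}$, $\tfrac{1}{T_i}$, $\tfrac{T_i-1}{T_i}$ land correctly, and in particular recognising that (\ref{PPLS-lemma-alpha-LMI-2}) delivers only the harmonic mean $\Theta^{-1}$, so that the operator-convexity step is genuinely needed to reach the arithmetic-mean interpolation used in $P_i(k)$. A secondary subtlety is noticing that the first LMI certifies the \emph{virtual} endpoint $m=-1$ rather than $m=0$; this is harmless, because the affine-in-$m$ convexity argument only needs the admissible integers to lie inside the certified interval $[-1,T_i-1]$.
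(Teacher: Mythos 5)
Your proof is correct. Note, however, that the paper does not actually prove this lemma: its ``proof'' is a two-line pointer to equations (21)--(22) and (9)--(15) of \cite{PPLS-JFI-2022}, so what you have written is a self-contained reconstruction of the argument that the paper outsources. Your chain of steps checks out in detail: the leading block of either LMI forces $G_i+G_i^{\mathrm{T}}\succ 0$ and hence invertibility of $G_i$; the congruence by $\mathrm{diag}(G_i^{-1},I,\dots)$ together with the completion-of-squares bound $G_i^{-\mathrm{T}}\Theta G_i^{-1}-G_i^{-1}-G_i^{-\mathrm{T}}\succeq -\Theta^{-1}$ eliminates the slack variable; the Schur complements of (\ref{PPLS-lemma-alpha-LMI-1}) and (\ref{PPLS-lemma-alpha-LMI-2}) deliver exactly $f(-1)\prec 0$ and (after the harmonic--arithmetic mean inequality) $f(T_i-1)\prec 0$; and affinity of $f$ in $m$ covers all admissible integers in between. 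Two features of your write-up are genuinely valuable beyond what the paper records: you make explicit that (\ref{PPLS-lemma-alpha-LMI-1}) certifies the \emph{virtual} endpoint $m=-1$ rather than $m=0$ (harmless, as you say, since $\{0,\dots,T_i-1\}\subset[-1,T_i-1]$), and you isolate the operator-convexity step needed to pass from the inverse-weighted block of (\ref{PPLS-lemma-alpha-LMI-2}) to the arithmetic interpolation appearing in $P_i(k)$ --- without that step the second LMI only yields the harmonic mean and the argument would not close. You also correctly handle the dwell boundary $m=T_i-1$, where $P(k+1)=P_i$ agrees with the next subsystem's initial Lyapunov matrix by continuity. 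The only trade-off is length: the citation-based proof is shorter, but yours is verifiable without consulting the reference.
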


\begin{proof}
Inequalities (\ref{PPLS-lemma-alpha-LMI-1}) and (\ref{PPLS-lemma-alpha-LMI-2}) align with (21) and (22) in \cite{PPLS-JFI-2022}, which guarantees $V_{i}(k+1)\leq \alpha_{i}V_{i}(k)$ according to (9)--(15) in \cite{PPLS-JFI-2022}.
\end{proof}

\begin{lemma}\label{PPLS-lemma-beta}
Consider system (\ref{PPLS}) under attacks, given $P_{i}\succ 0$ obtained from Lemma \ref{PPLS-lemma-alpha}, $i=1,2,\ldots,s$. If there exist $\beta_{i}(k)>0$ and matrices $L_{i}(k)$, $K_{i}(k)$, such that
\begin{equation}\label{PPLS-lemma-beta-LMI-1}
\begin{bmatrix}
-\frac{T_{i}+1}{T_{i}}\beta_{i}(k)P_{i-1}+\frac{\beta_{i}(k)}{T_{i}}P_{i} & *\\
A_{i}+B_{i}K_{i}(k)L_{i}(k) & -P_{i-1}^{-1}
\end{bmatrix}\preceq 0,
\end{equation}
\begin{equation}\label{PPLS-lemma-beta-LMI-2}
\begin{bmatrix}
-\frac{T_{i}-1}{T_{i}}\beta_{i}(k)P_{i}-\frac{\beta_{i}(k)}{T_{i}}P_{i-1} & *\\
A_{i}+B_{i}K_{i}(k)L_{i}(k) & -P_{i}^{-1}
\end{bmatrix}\preceq 0,
\end{equation}
then it holds $V_{i}(k+1)\leq\beta_{i}(k)V_{i}(k)$.\hfill $\Box$
\end{lemma}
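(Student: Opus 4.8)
The plan is to reduce the Lyapunov decrease condition to a single matrix inequality, to observe that this inequality is affine in the within-dwell-time index, and then to recover it as a convex combination of the two hypotheses after a Schur complement.

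First I would substitute the closed-loop dynamics $x(k+1)=\bar{A}_{i}(k)x(k)$ with $\bar{A}_{i}(k)\triangleq A_{i}+B_{i}K_{i}(k)L_{i}(k)$ into the Lyapunov function (\ref{PPLS-Lya}). Since $V_{i}(k)=x^{\mathrm{T}}(k)P_{i}(k)x(k)$, the bound $V_{i}(k+1)\leq\beta_{i}(k)V_{i}(k)$ holds for every $x(k)$ if and only if
\[
\bar{A}_{i}^{\mathrm{T}}(k)P_{i}(k+1)\bar{A}_{i}(k)\preceq\beta_{i}(k)P_{i}(k).
\]
Writing $\theta\triangleq(k-\ell T-k_{i-1})/T_{i}$, the affine interpolation in (\ref{PPLS-Lya}) gives $P_{i}(k)=(1-\theta)P_{i-1}+\theta P_{i}$ and $P_{i}(k+1)=(1-\theta-1/T_{i})P_{i-1}+(\theta+1/T_{i})P_{i}$, so the target inequality is affine in $\theta$, with $\theta$ ranging over $\{0,1/T_{i},\ldots,(T_{i}-1)/T_{i}\}$.

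Next I would strip the block structure from the two hypotheses. Because $P_{i-1}\succ0$ and $P_{i}\succ0$, the $(2,2)$ blocks $-P_{i-1}^{-1}$ and $-P_{i}^{-1}$ are negative definite, so the Schur complement applied to (\ref{PPLS-lemma-beta-LMI-1}) and (\ref{PPLS-lemma-beta-LMI-2}) yields the equivalent matrix inequalities
\[
\bar{A}_{i}^{\mathrm{T}}P_{i-1}\bar{A}_{i}\preceq\beta_{i}(k)\Big[\tfrac{T_{i}+1}{T_{i}}P_{i-1}-\tfrac{1}{T_{i}}P_{i}\Big],\qquad \bar{A}_{i}^{\mathrm{T}}P_{i}\bar{A}_{i}\preceq\beta_{i}(k)\Big[\tfrac{1}{T_{i}}P_{i-1}+\tfrac{T_{i}-1}{T_{i}}P_{i}\Big].
\]
These are exactly the two ``virtual vertex'' instances of the target inequality obtained by setting $P_{i}(k+1)=P_{i-1}$ (formally $\theta=-1/T_{i}$) and $P_{i}(k+1)=P_{i}$ (i.e.\ $\theta=(T_{i}-1)/T_{i}$), respectively.

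Finally I would interpolate between them. Setting $\lambda\triangleq1-\theta-1/T_{i}$, a direct check gives $\lambda\in[0,(T_{i}-1)/T_{i}]\subseteq[0,1]$ for every admissible $\theta$, and the convex combination $\lambda\times(\text{first inequality})+(1-\lambda)\times(\text{second inequality})$ reproduces the target: the left-hand side collapses to $\bar{A}_{i}^{\mathrm{T}}[(1-\theta-1/T_{i})P_{i-1}+(\theta+1/T_{i})P_{i}]\bar{A}_{i}=\bar{A}_{i}^{\mathrm{T}}P_{i}(k+1)\bar{A}_{i}$, while matching the $P_{i-1}$ and $P_{i}$ coefficients on the right returns $\beta_{i}(k)[(1-\theta)P_{i-1}+\theta P_{i}]=\beta_{i}(k)P_{i}(k)$. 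Since both hypotheses are nonstrict inequalities sharing a common $\bar{A}_{i}(k)$ and $\beta_{i}(k)$, their convex combination preserves $\preceq$, and the claimed decrease follows. The only real care is the coefficient bookkeeping in this last step; the main conceptual point — that the two LMIs are not the literal endpoints of the dwell-time but a pair of vertices whose convex hull covers every $\theta\in[0,(T_{i}-1)/T_{i}]$ — is what makes the two conditions sufficient simultaneously for all such $k$.
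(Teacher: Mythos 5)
Your proof is correct, and the coefficient bookkeeping checks out: with $\lambda=1-\theta-1/T_{i}$ one indeed gets $\lambda\in[0,(T_{i}-1)/T_{i}]$, the $P_{i-1}$-coefficient on the right collapses to $1-\theta$ and the $P_{i}$-coefficient to $\theta$, so the convex combination of the two Schur-complemented hypotheses reproduces $\bar{A}_{i}^{\mathrm{T}}(k)P_{i}(k+1)\bar{A}_{i}(k)\preceq\beta_{i}(k)P_{i}(k)$ for every admissible $k$ in the dwell interval (including the switching instant, where $P_{i}(k+1)=P_{i}$ matches the continuity of the Lyapunov function). The one point worth stating explicitly is that the nonstrict Schur complement equivalence you invoke is valid because the $(2,2)$ blocks $-P_{i-1}^{-1}$ and $-P_{i}^{-1}$ are \emph{strictly} negative definite, which you do note.

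The paper itself does not carry out this argument: its proof of the lemma is a one-line citation to (12)--(15) of the reference on periodic piecewise systems, with $A_{i}$ replaced by the closed-loop matrix $A_{i}+B_{i}K_{i}(k)L_{i}(k)$. Your write-up is therefore a self-contained reconstruction of the standard ``vertexization'' technique that the cited work uses for linearly interpolated Lyapunov matrices; your observation that the two LMIs correspond to virtual vertices at $\theta=-1/T_{i}$ and $\theta=(T_{i}-1)/T_{i}$ (rather than the literal endpoints of the dwell time) is exactly the right way to see why two conditions suffice for all $T_{i}$ time steps. What your version buys is transparency and verifiability within the paper; what the citation buys is brevity. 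No gap.
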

\begin{proof}

The above result can be obtained from (12)--(15) of \cite{PPLS-JFI-2022} by substituting $A_{i}$ there with $A_{i}+B_{i}K_{i}(k)L_{i}(k)$.
\end{proof}

When free from attacks, the controller gain is obtained with Lemma \ref{PPLS-lemma-alpha}, which also provides the Lyapunov matrices $P_{i}$, $i=1,2,\ldots,s$, for the online optimization (\ref{PPLS-opt}) under attacks.

When subject to attacks at time $k$, we optimize the bandwidth allocation $W(k)$ and controller gain $K_{i}(k)$ jointly to minimize the exponential order $\beta_{i}(k)$ such that $V_{i}(k+1)\leq \beta_{i}(k)V_{i}(k)$. It can be formulated as the following mixed-integer semidefinite programming, where the attack flow $\tilde{R}(k)$ serves as a time-varying parameter, the bandwidth allocation $W(k)$ and controller gain $K_{i}(k)$ serve as decision variables.
\begin{subequations}\label{PPLS-opt}
\begin{align}
&\beta_{i}^{*}(k)=\min\limits_{W(k),L_{i}(k),K_{i}(k),\beta_{i}(k)}\ \beta_{i}(k)\label{PPLS-opt-obj}\\
\text{s.t.}\quad &l_{i}^{(j)}(k)=\fontsize{9}{11}\selectfont{\begin{cases}
1, & \big(R^{(j)}+\tilde{R}^{(j)}(k)-W^{(j)}(k-1)\big)\tau< S^{(j)}\\
  & \text{and}\ W^{(j)}(k)\geq R^{(j)}+\tilde{R}^{(j)}(k),\\
0, & \text{otherwise}.
\end{cases}}\label{PPLS-opt-logic}\\
&\sum_{j=1}^{n}W^{(j)}(k)\leq W_{\Sigma},\ W^{(j)}(k)\geq 0,\ \sum_{j=1}^{n} l_{i}^{(j)}(k)\geq 1,\label{PPLS-opt-source}\\
&\begin{bmatrix}
-\frac{T_{i}+1}{T_{i}}\beta_{i}(k)P_{i-1}+\frac{\beta_{i}(k)}{T_{i}}P_{i} & *\\
A_{i}+B_{i}K_{i}(k)L_{i}(k) & -P_{i-1}^{-1}
\end{bmatrix}\preceq 0,\label{PPLS-opt-LMI-1}\\
&\begin{bmatrix}
-\frac{T_{i}-1}{T_{i}}\beta_{i}(k)P_{i}-\frac{\beta_{i}(k)}{T_{i}}P_{i-1} & *\\
A_{i}+B_{i}K_{i}(k)L_{i}(k) & -P_{i}^{-1}
\end{bmatrix}\preceq 0,\label{PPLS-opt-LMI-2}
\end{align}
\end{subequations}
where $L_{i}(k)$, $\tilde{R}(k)$ and $W(k)$ are defined in Section \ref{section-enable}. With Lemma \ref{PPLS-lemma-beta}, $V_{i}(k+1)\leq \beta_{i}^{*}(k)V_{i}(k)$ can be achieved under the optimal $W(k)$ and $K_{i}(k)$ from (\ref{PPLS-opt}).

\begin{remark}
In the online optimization (12), the bandwidth allocation $W(k)$, controller gain $K_{i}(k)$ are optimized jointly to minimize exponential order $\beta_{i}(k)$ according to the attack flow $\tilde{R}(k)$ and system mode $i$. One may notice that the proposed cross-layered defense framework can be applied to other systems apart from PPLSs. Compared with the pure resilient control \cite{res-Italy-2015,res-Feng-2017,res-Sun-2017,multi-Lu-2018}, we bridge the gap between network deployment and system dynamics with the state-space model (\ref{PPLS}) embedded by (\ref{PPLS-cond}). Based on it, we can perform joint optimization to enhance flexibility and resilience.
\end{remark}

\begin{theorem}
Problem (\ref{PPLS-opt}) has an optimal solution under any attack flow $\tilde{R}(k)$.
\end{theorem}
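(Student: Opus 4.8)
The plan is to exploit the mixed-integer structure of (\ref{PPLS-opt}): the only genuinely discrete object is the channel-state matrix $L_{i}(k)=\mathrm{diag}(l_{i}^{(1)}(k),\ldots,l_{i}^{(n)}(k))$, whose diagonal lies in $\{0,1\}^{n}$ and therefore ranges over a finite set of at most $2^{n}$ configurations. I would first observe that the continuous variable $W(k)$ never appears in the objective (\ref{PPLS-opt-obj}) nor in the LMIs (\ref{PPLS-opt-LMI-1})--(\ref{PPLS-opt-LMI-2}); it enters only through the logical rule (\ref{PPLS-opt-logic}) and the budget constraint (\ref{PPLS-opt-source}). Hence deciding whether a target diagonal $\ell\in\{0,1\}^{n}$ is attainable reduces to a linear feasibility test in $W(k)$: the delay clause of (\ref{PPLS-opt-logic}) is a fixed predicate in the known data $W(k-1),\tilde{R}(k)$, while enabling channel $j$ additionally requires $W^{(j)}(k)\ge R^{(j)}+\tilde{R}^{(j)}(k)$ under $\sum_{j}W^{(j)}(k)\le W_{\Sigma}$, $W^{(j)}(k)\ge 0$. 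Let $\mathcal{L}$ collect those $\ell$ that are realizable by some such $W(k)$ and satisfy $\sum_{j}l_{i}^{(j)}(k)\ge 1$.

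Next I would establish that $\mathcal{L}\ne\emptyset$ and that each $\ell\in\mathcal{L}$ yields a feasible subproblem, so that (\ref{PPLS-opt}) is feasible. Nonemptiness is exactly Remark \ref{remark-one}: under Assumption \ref{assump-one} at least one channel can be enabled for any $\tilde{R}(k)$. Fixing $L_{i}(k)=\mathrm{diag}(\ell)$ turns (\ref{PPLS-opt-LMI-1})--(\ref{PPLS-opt-LMI-2}) into constraints that are affine in $(\beta_{i}(k),K_{i}(k))$, since $K_{i}(k)L_{i}(k)$ is linear in $K_{i}(k)$. The $(2,2)$ blocks $-P_{i-1}^{-1}$ and $-P_{i}^{-1}$ are negative definite, so by the Schur complement the two LMIs are equivalent to $(A_{i}+B_{i}K_{i}(k)L_{i}(k))^{\mathrm{T}}P_{i-1}(A_{i}+B_{i}K_{i}(k)L_{i}(k))\preceq\beta_{i}(k)\big(\tfrac{T_{i}+1}{T_{i}}P_{i-1}-\tfrac{1}{T_{i}}P_{i}\big)$ together with an analogous inequality whose right-hand coefficient is $\tfrac{T_{i}-1}{T_{i}}P_{i}+\tfrac{1}{T_{i}}P_{i-1}\succ 0$. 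The latter coefficient is positive definite automatically because $T_{i}\ge 1$; the former, $\tfrac{T_{i}+1}{T_{i}}P_{i-1}-\tfrac{1}{T_{i}}P_{i}$, must be positive definite by the construction of $P_{i},P_{i-1}$ in Lemma \ref{PPLS-lemma-alpha}. Taking $K_{i}(k)=0$ and $\beta_{i}(k)$ large then satisfies both, so the subproblem is feasible.

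For attainment I would, for each fixed $\ell\in\mathcal{L}$, reparametrize by the closed-loop term $F:=B_{i}K_{i}(k)L_{i}(k)$, which ranges over the linear image subspace $\{B_{i}KL_{i}(k):K\}$ and is therefore closed; in the variables $(\beta_{i}(k),F)$ the feasible set is a closed spectrahedron. Because the coefficient $\tfrac{T_{i}+1}{T_{i}}P_{i-1}-\tfrac{1}{T_{i}}P_{i}$ is positive definite, any nonempty sublevel set $\{\beta_{i}(k)\le c\}$ forces $(A_{i}+F)^{\mathrm{T}}P_{i-1}(A_{i}+F)$ to be bounded, hence $F$ bounded; the sublevel set is thus compact, and by Weierstrass the linear objective $\beta_{i}(k)$ attains its minimum $\beta_{i}^{*}(\ell)$ at some $(\beta_{i}^{*}(\ell),F^{*})$, from which a gain $K_{i}^{*}$ with $B_{i}K_{i}^{*}L_{i}(k)=F^{*}$ and a realizing $W^{*}(k)$ are recovered. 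Finally $\mathcal{L}$ is finite, so $\beta_{i}^{*}(k)=\min_{\ell\in\mathcal{L}}\beta_{i}^{*}(\ell)$ is a minimum over finitely many attained values and is itself attained; the associated data constitute an optimal solution of (\ref{PPLS-opt}).

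The main obstacle I anticipate is not the enumeration but the two analytic points inside each fixed-$\ell$ subproblem. First, feasibility hinges on the definiteness $\tfrac{T_{i}+1}{T_{i}}P_{i-1}-\tfrac{1}{T_{i}}P_{i}\succ 0$ (equivalently $P_{i}\prec(T_{i}+1)P_{i-1}$); since this matrix is independent of the attack flow, a positive eigenvalue would render (\ref{PPLS-opt-LMI-1}) infeasible for \emph{every} configuration, so the property must be secured from the Lemma \ref{PPLS-lemma-alpha} design, and verifying it is the crux of establishing feasibility ``under any attack flow.'' Second, the product $K_{i}(k)L_{i}(k)$ renders the raw problem nonconvex and leaves $K_{i}(k)$ possibly unbounded along directions annihilated by $B_{i}$ or $L_{i}(k)$; the reparametrization through the closed subspace of $F=B_{i}K_{i}(k)L_{i}(k)$ is precisely what restores the compactness needed for attainment, and getting this bookkeeping right is the delicate step.
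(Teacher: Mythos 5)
Your proposal is correct in substance and follows the same skeleton as the paper's proof: enumerate the finitely many realizable channel configurations (nonempty by Assumption \ref{assump-one} and Remark \ref{remark-one}), reduce each fixed-$L_{i}(k)$ subproblem to a convex one via the Schur complement, establish feasibility from the definiteness of $(T_{i}+1)P_{i-1}-P_{i}$, and take the minimum over the finite set. The one point you flag but do not close --- that $\tfrac{T_{i}+1}{T_{i}}P_{i-1}-\tfrac{1}{T_{i}}P_{i}\succ 0$ --- is precisely where the paper does its only real work: the LMIs of Lemma \ref{PPLS-lemma-alpha} imply $(A_{i}+B_{i}K_{i})^{\mathrm{T}}P_{i-1}(A_{i}+B_{i}K_{i})-\tfrac{\alpha_{i}}{T_{i}}\big((T_{i}+1)P_{i-1}-P_{i}\big)\prec 0$, and since the first term is positive semidefinite the subtracted term must be positive definite; you should supply this one-line deduction rather than leave it as an anticipated obstacle. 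Where you genuinely diverge is the attainment step: the paper argues via strict feasibility, Slater's condition, strong duality, and the lower bound $\beta_{i}(k)\geq 0$, whereas you reparametrize by $F=B_{i}K_{i}(k)L_{i}(k)$, note that the constraints depend on $K_{i}(k)$ only through $F$ ranging over a closed subspace, show that $\beta_{i}(k)\leq c$ bounds $(A_{i}+F)^{\mathrm{T}}P_{i-1}(A_{i}+F)$ and hence $F$ (since $P_{i-1}\succ 0$), and invoke Weierstrass on the resulting compact sublevel sets. Your route is arguably the more watertight of the two, since strong duality together with a finite infimum does not by itself guarantee that the primal infimum is attained; your compactness argument delivers attainment directly and also handles the directions of $K_{i}(k)$ annihilated by $B_{i}$ or $L_{i}(k)$, which the paper passes over.
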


\begin{proof}
One can find that $L_{i}(k)$ is the intermediate variable bridging the gap between $(\tilde{R}(k), W(k))$ and $(K_{i}(k), \beta_{i}(k))$. There are $2^{n}-1$ possible outcomes of $L_{i}(k)$ dependent on $\tilde{R}(k)$ and $W(k)$, especially $L_{i}(k)=\mathrm{diag}(0,0,\ldots,0)$ is excluded according to (\ref{PPLS-opt-source}), which is surely satisfied according to Assumption \ref{assump-one} and Remark \ref{remark-one}.

For any possible outcome of $L_{i}(k)$, Problem (\ref{PPLS-opt}) can be reduced to a convex optimization problem with constraints (\ref{PPLS-opt-LMI-1}) and (\ref{PPLS-opt-LMI-2}), which can be rewritten as follows with the Schur complement.
\begin{equation}\label{PPLS-fea-Schur-1}
\begin{aligned}
(A_{i}+B_{i}K_{i}(k)L_{i}(k))^{\mathrm{T}}P_{i-1}(A_{i}+B_{i}K_{i}(k)L_{i}(k))\\
-\frac{\beta_{i}(k)}{T_{i}}((T_{i}+1)P_{i-1}-P_{i})\preceq 0,
\end{aligned}
\end{equation}

\begin{equation}\label{PPLS-fea-Schur-2}
\begin{aligned}
(A_{i}+B_{i}K_{i}(k)L_{i}(k))^{\mathrm{T}}P_{i}(A_{i}+B_{i}K_{i}(k)L_{i}(k))\\
-\frac{\beta_{i}(k)}{T_{i}}(P_{i-1}+(T_{i}-1)P_{i})\preceq 0,
\end{aligned}
\end{equation}
where matrices $P_{i-1}\succ 0$, $P_{i}\succ 0$ are obtained from Lemma~\ref{PPLS-lemma-alpha}. According to the proof of Corollary 1 and (14) in \cite{PPLS-JFI-2022}, $P_{i-1},P_{i}$ satisfy
\begin{equation}
(A_{i}+B_{i}K_{i})^{\mathrm{T}}P_{i-1}(A_{i}+B_{i}K_{i})-\frac{\alpha_{i}}{T_{i}}((T_{i}+1)P_{i-1}-P_{i})\prec 0,
\end{equation}
which implies $(T_{i}+1)P_{i-1}\succ P_{i}$. Then with $(T_{i}+1)P_{i-1}-P_{i}\succ 0$ and $P_{i-1}+(T_{i}-1)P_{i}\succ 0$, for any $K_{i}(k)$ and $L_{i}(k)$, (\ref{PPLS-fea-Schur-1}) and (\ref{PPLS-fea-Schur-2}) holds strictly with a large enough $\beta_{i}(k)>0$. Hence, Problem (\ref{PPLS-opt}) is strictly feasible for any channels' state $L_{i}(k)$ within the $2^{n}-1$ possible outcomes.

With the strict feasibility and Slater's condition \cite{opt-convex}, Problem (\ref{PPLS-opt}) holds strong duality. According to Lemma \ref{PPLS-lemma-beta}, (\ref{PPLS-opt-LMI-1}) and (\ref{PPLS-opt-LMI-2}), the objective value $\beta_{i}(k)$ satisfies $V_{i}(k+1)\leq \beta_{i}(k)V_{i}(k)$. With the positive definiteness of $V_{i}(k)$, one has $\beta_{i}(k)\geq 0$. According to the strong duality and the lower boundedness of $\beta_{i}(k)$, Problem (\ref{PPLS-opt}) has an optimal solution under any attack flow $\tilde{R}(k)$.
\end{proof}

\subsection{Linearization of online optimization}
With regard to the nonlinear conditions (\ref{PPLS-opt-logic}), (\ref{PPLS-opt-LMI-1}) and (\ref{PPLS-opt-LMI-2}), we have the following two propositions.

\begin{proposition}\label{prop-logic}
The enabling condition (\ref{PPLS-opt-logic}) can be linearized equivalently as the following inequality system:
\begin{equation}\label{opt-bigM}
\resizebox{0.95\hsize}{!}{$
\begin{aligned}
\left(R^{(j)}+\tilde{R}^{(j)}(k)-W^{(j)}(k-1)\right)\tau-S^{(j)}&<M(1-z_{1}^{(j)}(k)),\\
\left(R^{(j)}+\tilde{R}^{(j)}(k)-W^{(j)}(k-1)\right)\tau-S^{(j)}&\geq -M z_{1}^{(j)}(k),\\
W^{(j)}(k)-R^{(j)}-\tilde{R}^{(j)}(k) &\geq -M (1-z_{2}^{(j)}(k)),\\
W^{(j)}(k)-R^{(j)}-\tilde{R}^{(j)}(k) &<M z_{2}^{(j)}(k),\\
l_{i}^{(j)}(k)\leq z_{1}^{(j)}(k),\ l_{i}^{(j)}(k)\leq z_{2}^{(j)}(k),\ l_{i}^{(j)}(k)&\geq z_{1}^{(j)}(k)+z_{2}^{(j)}(k)-1,
\end{aligned}
$}
\end{equation}
where
\begin{equation*}
\resizebox{0.95\hsize}{!}{$
\begin{aligned}
&z_{1}^{(j)}(k),z_{2}^{(j)}(k)\in \{0,1\},\:M=\max \left\{M_{1},M_{2},M_{3},M_{4}\right\},\\
&M_{1}=\max\limits_{j}\big\{\big(R^{(j)}+\bar{R}^{(j)}\big)\tau-S^{(j)}\big\},\:M_{3}=\max\limits_{j}\big\{R^{(j)}+\bar{R}^{(j)}\big\},\\
&M_{2}=\max\limits_{j}\big\{\big(W_{\Sigma}-R^{(j)}\big)\tau+S^{(j)}\big\},\:M_{4}=\max\limits_{j}\big\{W_{\Sigma}-R^{(j)}\big\}.
\end{aligned}
$}
\end{equation*}
Indeed $z_{1}^{(j)}(k)=1$ indicates $\big(R^{(j)}+\tilde{R}^{(j)}(k)-W^{(j)}(k-1)\big)\tau< S^{(j)}$, and $z_{1}^{(j)}(k)=0$ indicates the violation. Similarly, $z_{2}^{(j)}(k)$ indicates whether $W^{(j)}(k)\geq R^{(j)}+\tilde{R}^{(j)}(k)$ or not.\hfill $\Box$
\end{proposition}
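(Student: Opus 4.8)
The plan is to treat (\ref{opt-bigM}) as a standard big-M encoding of the conjunction of two threshold predicates, and to verify the equivalence in two layers: the auxiliary binaries $z_1^{(j)}(k)$ and $z_2^{(j)}(k)$ are forced to equal the indicators of the delay and allocation conditions, and the final three inequalities then force $l_i^{(j)}(k)$ to be their logical AND. To organize the argument I would abbreviate the two affine quantities in (\ref{PPLS-opt-logic}) as $f_1^{(j)}\triangleq(R^{(j)}+\tilde{R}^{(j)}(k)-W^{(j)}(k-1))\tau-S^{(j)}$ and $f_2^{(j)}\triangleq W^{(j)}(k)-R^{(j)}-\tilde{R}^{(j)}(k)$, so that the delay condition reads $f_1^{(j)}<0$ and the allocation condition reads $f_2^{(j)}\geq 0$, while $l_i^{(j)}(k)\in\{0,1\}$ is binary by its definition in (\ref{PPLS}).

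The quantitative core is to bound $f_1^{(j)}$ and $f_2^{(j)}$ over all admissible values of the attack flow and the bandwidths. Using $0\leq\tilde{R}^{(j)}(k)\leq\bar{R}^{(j)}$ from Assumption \ref{assump-flow} together with $0\leq W^{(j)}(k)\leq W_{\Sigma}$ (and the same bounds for $W^{(j)}(k-1)$) implied by (\ref{PPLS-opt-source}), I would push each variable to its extreme to obtain $-M_2\leq f_1^{(j)}\leq M_1$ and $-M_3\leq f_2^{(j)}\leq M_4$; this is exactly how the four constants arise, each being the extremal value of one expression in one direction. Consequently $|f_1^{(j)}|\leq M$ and $|f_2^{(j)}|\leq M$.

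With these bounds, the indicator equivalences follow by a two-case check on each binary. If $z_1^{(j)}(k)=1$ the first inequality of (\ref{opt-bigM}) collapses to $f_1^{(j)}<0$ while the second, $f_1^{(j)}\geq -M$, is automatically slack; if $z_1^{(j)}(k)=0$ the second gives $f_1^{(j)}\geq 0$ while the first, $f_1^{(j)}<M$, is slack. Hence $z_1^{(j)}(k)=1$ iff the delay condition holds, and the symmetric analysis of the third and fourth inequalities gives $z_2^{(j)}(k)=1$ iff the allocation condition holds. The remaining inequalities $l_i^{(j)}(k)\leq z_1^{(j)}(k)$, $l_i^{(j)}(k)\leq z_2^{(j)}(k)$, $l_i^{(j)}(k)\geq z_1^{(j)}(k)+z_2^{(j)}(k)-1$ are the textbook linearization of the product $l_i^{(j)}(k)=z_1^{(j)}(k)z_2^{(j)}(k)$ for binary variables, so $l_i^{(j)}(k)=1$ precisely when both conditions hold, which recovers (\ref{PPLS-opt-logic}).

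The step I expect to be the main obstacle is the bound verification above, together with the attendant strict/non-strict bookkeeping. The slack constraints $f_1^{(j)}<M$ (when $z_1^{(j)}(k)=0$) and $f_2^{(j)}<M$ (when $z_2^{(j)}(k)=1$) are strict, so at the extreme configurations where $f_1^{(j)}=M_1$ or $f_2^{(j)}=M_4$ one needs $M$ strictly above these values rather than merely $M=\max\{M_1,M_2,M_3,M_4\}$. I would resolve this by taking any $M$ at least as large as this maximum and noting that enlarging $M$ by an arbitrarily small margin leaves the forced values of $z_1^{(j)}(k)$ and $z_2^{(j)}(k)$ unchanged, and hence leaves the feasible set of (\ref{opt-bigM}) projected onto the original variables unchanged, so the equivalence with (\ref{PPLS-opt-logic}) is preserved.
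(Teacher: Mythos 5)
Your proposal is correct and follows essentially the same route as the paper: the paper's proof likewise invokes the textbook big-M encoding, justifies it by checking that $M\geq\max\{M_{1},M_{2}\}$ dominates $\sup\big|\big(R^{(j)}+\tilde{R}^{(j)}(k)-W^{(j)}(k-1)\big)\tau-S^{(j)}\big|$ and $M\geq\max\{M_{3},M_{4}\}$ dominates $\sup\big|W^{(j)}(k)-R^{(j)}-\tilde{R}^{(j)}(k)\big|$, and then applies the standard three-inequality linearization of the conjunction $l_{i}^{(j)}(k)=z_{1}^{(j)}(k)z_{2}^{(j)}(k)$. Your extra observation about the strict inequalities at the boundary configurations where the affine expressions attain $M_{1}$ or $M_{4}$ exactly is a genuine subtlety that the paper's proof passes over silently, and your remedy of taking $M$ strictly larger than the stated maximum is the correct way to close it.
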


\begin{proof}
Referring to Chapter 9.2 of \cite{opt-logic}, with big-M method,
\begin{equation}
z_{1}^{(j)}(k)=\fontsize{9}{11}\selectfont{\begin{cases}
1, & \big(R^{(j)}+\tilde{R}_{i}^{(j)}(k)-W_{i}^{(j)}(k-1)\big)\tau< S^{(j)},\\
0, & \text{otherwise}.
\end{cases}}
\end{equation}
can be linearized equivalently as
\begin{equation}
\resizebox{0.95\hsize}{!}{$
\begin{aligned}
\big(R^{(j)}+\tilde{R}^{(j)}(k)-W^{(j)}(k-1)\big)\tau-S^{(j)}&<M(1-z_{1}^{(j)}(k)),\\
\big(R^{(j)}+\tilde{R}^{(j)}(k)-W^{(j)}(k-1)\big)\tau-S^{(j)}&\geq -M z_{1}^{(j)}(k),
\end{aligned}
$}
\end{equation}
since
\begin{equation}
\resizebox{0.91\hsize}{!}{$
\begin{aligned}
M\geq&\max\{M_{1},M_{2}\}\\
\geq&\sup\big|\big(R^{(j)}+\tilde{R}^{(j)}(k)-W^{(j)}(k-1)\big)\tau-S^{(j)}\big|.
\end{aligned}
$}
\end{equation}
Similarly,
\begin{equation}
z_{2}^{(j)}(k)=\begin{cases}
1, & W^{(j)}(k)\geq R^{(j)}+\tilde{R}^{(j)}(k),\\
0, & \text{otherwise}.
\end{cases}
\end{equation}
can be linearized equivalently as
\begin{equation}
\begin{aligned}
W^{(j)}(k)-R^{(j)}-\tilde{R}^{(j)}(k) &\geq -M (1-z_{2}^{(j)}(k)),\\
W^{(j)}(k)-R^{(j)}-\tilde{R}^{(j)}(k) &< M z_{2}^{(j)}(k),
\end{aligned}
\end{equation}
since
\begin{equation}
\resizebox{0.66\hsize}{!}{$
\begin{aligned}
M\geq&\max\{M_{3},M_{4}\}\\
\geq&\sup\big|W^{(j)}(k)-R^{(j)}-\tilde{R}^{(j)}(k)\big|.
\end{aligned}
$}
\end{equation}

Furthermore, the logical condition
\begin{equation}
l_{i}^{(j)}(k)=\begin{cases}
1, & z_{1}^{(j)}(k)=1\text{ and }z_{2}^{(j)}(k)=1,\\
0, & \text{otherwise}.
\end{cases}
\end{equation}
is equivalent to
\begin{equation}
\resizebox{0.98\hsize}{!}{$
l_{i}^{(j)}(k)\leq z_{1}^{(j)}(k),\ l_{i}^{(j)}(k)\leq z_{2}^{(j)}(k),\ l_{i}^{(j)}(k)\geq z_{1}^{(j)}(k)+z_{2}^{(j)}(k)-1.
$}
\end{equation}
Noting that $z_{1}^{(j)}(k)$ indicates whether $\big(R^{(j)}+\tilde{R}^{(j)}(k)-W^{(j)}(k-1)\big)\tau< S^{(j)}$, $z_{2}^{(j)}(k)$ indicates whether $W^{(j)}(k)\geq R^{(j)}+\tilde{R}^{(j)}(k)$, the equivalence of linearization in Proposition \ref{prop-logic} can be proved.
\end{proof}

\begin{proposition}\label{prop-product}
Assume that the controller gain $K_{i}(k)=[k_{i,pq}(k)]_{n\times n}$ is bounded as $-\bar{k}\leq k_{i,pq}(k)\leq \bar{k}$, then $K_{i}(k)L_{i}(k)$ in (\ref{PPLS-opt-LMI-1}) and (\ref{PPLS-opt-LMI-2}) can be linearized equivalently as $Q_{i}(k)-\bar{k}\mathbf{1}_{n\times n}L_{i}(k)$
and
\begin{equation}\label{PPLS-opt-product-linear}
\begin{aligned}
2\bar{k}\mathbf{1}_{n \times n}\geq Q_{i}(k)\geq 0,\ Q_{i}(k)-2\bar{k}\mathbf{1}_{n\times n}L_{i}(k)&\leq 0,\\ Q_{i}(k)-K_{i}(k)-\bar{k}\mathbf{1}_{n \times n} &\leq 0,\\
Q_{i}(k)-K_{i}(k)+\bar{k}\mathbf{1}_{n\times n}(I-2L_{i}(k)) &\geq 0,
\end{aligned}
\end{equation}
where (\ref{PPLS-opt-product-linear}) forces $Q_{i}(k)-\bar{k}\mathbf{1}_{n\times n}L_{i}(k)$ to take the value of $K_{i}(k)L_{i}(k)$.\hfill $\Box$
\end{proposition}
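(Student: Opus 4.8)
The plan is to reduce the asserted matrix identity to a collection of scalar product linearizations, one per matrix entry, and to recognize each as the standard big-M linearization of the product of a binary variable and a bounded continuous variable. First I would exploit the diagonal structure of $L_i(k)=\mathrm{diag}(l_i^{(1)}(k),\ldots,l_i^{(n)}(k))$ with $l_i^{(q)}(k)\in\{0,1\}$: right-multiplication by $L_i(k)$ scales the $q$th column, so the $(p,q)$ entry of the target term is $\big(K_i(k)L_i(k)\big)_{pq}=k_{i,pq}(k)\,l_i^{(q)}(k)$, a product of the continuous variable $k_{i,pq}(k)\in[-\bar{k},\bar{k}]$ and the binary variable $l_i^{(q)}(k)$. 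The same computation gives $(\mathbf{1}_{n\times n}L_i(k))_{pq}=l_i^{(q)}(k)$ and $(\mathbf{1}_{n\times n}(I-2L_i(k)))_{pq}=1-2l_i^{(q)}(k)$, so every matrix inequality in (\ref{PPLS-opt-product-linear}) decouples into entrywise scalar constraints.

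The key device is a shift to nonnegativity, which is exactly where the boundedness hypothesis $-\bar{k}\le k_{i,pq}(k)\le\bar{k}$ is used. I would set $x_{pq}:=k_{i,pq}(k)+\bar{k}\in[0,2\bar{k}]$ and $y_q:=l_i^{(q)}(k)\in\{0,1\}$, and let $(Q_i(k))_{pq}$ stand for the product $x_{pq}y_q$. The claimed substitution is then immediate entrywise: $(Q_i(k))_{pq}-\bar{k}\,l_i^{(q)}(k)=x_{pq}y_q-\bar{k}y_q=k_{i,pq}(k)\,l_i^{(q)}(k)=(K_i(k)L_i(k))_{pq}$. It therefore remains only to verify that the four constraints in (\ref{PPLS-opt-product-linear}) pin $(Q_i(k))_{pq}$ to $x_{pq}y_q$.

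Writing $U:=2\bar{k}$, the decoupled constraints read $0\le (Q_i)_{pq}\le U$, $(Q_i)_{pq}\le Uy_q$, $(Q_i)_{pq}\le x_{pq}$, and $(Q_i)_{pq}\ge x_{pq}-U(1-y_q)$, which is precisely the McCormick/big-M envelope of $w=xy$ on $x\in[0,U]$, $y\in\{0,1\}$. I would close with the two-case argument over the binary value: if $y_q=0$, the second constraint forces $(Q_i)_{pq}\le 0$ and the first forces $(Q_i)_{pq}\ge 0$, so $(Q_i)_{pq}=0=x_{pq}y_q$; if $y_q=1$, the third gives $(Q_i)_{pq}\le x_{pq}$ and the fourth gives $(Q_i)_{pq}\ge x_{pq}$, so $(Q_i)_{pq}=x_{pq}=x_{pq}y_q$. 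Since $y_q$ is always $0$ or $1$, the constraints are equivalent to $(Q_i)_{pq}=x_{pq}y_q$; reassembling over all $(p,q)$ yields $Q_i(k)-\bar{k}\mathbf{1}_{n\times n}L_i(k)=K_i(k)L_i(k)$ exactly, establishing the equivalence.

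I expect the main (mild) obstacle to be bookkeeping rather than depth: confirming that the matrix expressions $\mathbf{1}_{n\times n}L_i(k)$ and $\mathbf{1}_{n\times n}(I-2L_i(k))$ correctly broadcast the per-column binary constants $l_i^{(q)}(k)$ and $1-2l_i^{(q)}(k)$ across every row, so that the entrywise envelope constraints reassemble into the stated matrix inequalities. This is exactly what the diagonality of $L_i(k)$ guarantees, since the binary factor in column $q$ is then shared by all rows, matching the column-wise scaling structure of $K_i(k)L_i(k)$.
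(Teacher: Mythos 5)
Your proof is correct and follows essentially the same route as the paper: the shift $\hat{k}_{i,pq}=k_{i,pq}(k)+\bar{k}\in[0,2\bar{k}]$, the big-M/McCormick linearization of the binary--continuous product, and the entrywise reassembly into matrix form (the paper simply cites a reference for the scalar linearization that you verify directly via the two-case argument on $l_i^{(q)}(k)\in\{0,1\}$). No gaps.
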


\begin{proof}
Let $k_{i}\in [-\bar{k},\bar{k}]$, $l_{i}\in\{0,1\}$ and $\hat{k}_{i}\triangleq k_{i}+\bar{k}\in [0,2\bar{k}]$. Referring to (11)--(14) in \cite{opt-product}, $\hat{k}_{i}l_{i}$ can be replaced equivalently with $q_{i}$ along with
\begin{equation}\label{PPLS-opt-proof-product-hat-k}
\begin{aligned}
2\bar{k}\geq q_{i} &\geq 0,\\
q_{i}-2\bar{k}l_{i}&\leq 0,\\
q_{i}-\hat{k}_{i} &\leq 0,\\
q_{i}-\hat{k}_{i}+2\bar{k}(1-l_{i}) &\geq 0,
\end{aligned}
\end{equation}
which forces $q_{i}$ to take the value of $k_{i}l_{i}$. With $\hat{k}_{i}= k_{i}+\bar{k}$, one has that $k_{i}l_{i}$ can be replaced equivalently with $q_{i}-\bar{k}l_{i}$ along with
\begin{equation}\label{PPLS-opt-proof-product-k}
\begin{aligned}
2\bar{k}\geq q_{i} &\geq 0,\\
q_{i}-2\bar{k}l_{i}&\leq 0,\\
q_{i}-k_{i}-\bar{k} &\leq 0,\\
q_{i}-k_{i}+\bar{k}-2\bar{k}l_{i} &\geq 0.
\end{aligned}
\end{equation}
Then, Proposition \ref{prop-product} can be obtained as an extension of (\ref{PPLS-opt-proof-product-k}) in form of matrices.
\end{proof}

\begin{remark}
With Propositions \ref{prop-logic} and \ref{prop-product}, the nonlinear conditions (\ref{PPLS-opt-logic}), (\ref{PPLS-opt-LMI-1}) and (\ref{PPLS-opt-LMI-2}) are transformed into linear inequalities, which can be solved by common numerical softwares.
\end{remark}

\section{Stability analysis}\label{section-stability}
As the focus of the defense strategy, the single-level optimization problem (\ref{PPLS-opt}) is solved online to minimize exponential order $\beta_{i}(k)$ according to attack flow $\tilde{R}(k)$ and system mode $i$. The bi-level mixed-integer semidefinite programming (\ref{PPLS-opt-bi}) is solved offline to establish the worst case (supremum $\bar{\beta}_{i}$) of the optimal result $\beta_{i}^{*}(k)$ in Problem (\ref{PPLS-opt}). Then $\bar{\beta}_{i}$ is synthesized with the exponential order $\alpha_{i}$ in the absence of attacks to analyze the stability of the closed-loop system under the defense strategy.

\subsection{Worst case of online optimization}
Observing the online optimization problem (\ref{PPLS-opt}), one can find that the optimal result $\beta_{i}^{*}(k)$ is dependent on the real-time attack flow $\tilde{R}(k)$. By solving the following bi-level problem offline, we can establish the worst case (supremum $\bar{\beta}_{i}$) of $\beta_{i}^{*}(k)$ and the corresponding attack flow.

\begin{subequations}\label{PPLS-opt-bi}
\begin{align}
&\bar{\beta}_{i}=\max\limits_{\tilde{R}}\ \min\limits_{W,L_{i},K_{i},\beta_{i}}\ \beta_{i}\\
\text{s.t.}\quad &l_{i}^{(j)}=\fontsize{9}{11}\selectfont{\begin{cases}
1, & \big(R^{(j)}+\tilde{R}^{(j)}\big)\tau< S^{(j)}\\
  & \text{and}\ W^{(j)}\geq R^{(j)}+\tilde{R}^{(j)},\\
0, & \text{otherwise}.
\end{cases}}\label{PPLS-opt-bi-logic}\\
&\sum_{j=1}^{n}W^{(j)}\leq W_{\Sigma},\ W^{(j)}\geq 0,\ \sum_{j=1}^{n} l_{i}^{(j)}\geq 1,\\
&\sum_{j=1}^{n}\tilde{R}^{(j)}\leq \tilde{R}_{\Sigma},\ 0\leq\tilde{R}^{(j)}\leq \bar{R}^{(j)},\label{PPLS-opt-bi-attack}\\
&\begin{bmatrix}
-\frac{T_{i}+1}{T_{i}}\beta_{i}P_{i-1}+\frac{\beta_{i}}{T_{i}}P_{i} & *\\
A_{i}+B_{i}K_{i}L_{i} & -P_{i-1}^{-1}
\end{bmatrix}\preceq 0,\\
&\begin{bmatrix}
-\frac{T_{i}-1}{T_{i}}\beta_{i}P_{i}-\frac{\beta_{i}}{T_{i}}P_{i-1} & *\\
A_{i}+B_{i}K_{i}L_{i} & -P_{i}^{-1}
\end{bmatrix}\preceq 0,
\end{align}
\end{subequations}
where $W^{(j)}(k-1)=0$ is given in (\ref{PPLS-opt-bi-logic}) to investigate the worst case. Note that Problem (\ref{PPLS-opt-bi}), where attack flow $\tilde{R}$ serves as a variable, is solved offline without dependence on real-time attack flow.
\begin{remark}
One may notice that the inner-level minimization problem of Problem (\ref{PPLS-opt-bi}) is non-convex, Problem (\ref{PPLS-opt-bi}) cannot be reduced to a single-level maximization problem with Karush-Kuhn-Tucker conditions or strong dual method \cite{opt-convex}. It motivates us to propose a customized solution algorithm: smart enumeration algorithm (SEA).
\end{remark}

\subsection{Smart enumeration algorithm}
SEA is proposed to solve Problem (\ref{PPLS-opt-bi}) exactly. To clarify further discussion, some notations are given as follows.
\begin{itemize}
  \item $\mathcal{L}$, the main diagonal vector of $L_{i}$, denotes the channels' state, with $\mathcal{L}(\mathbf{1})$, $\mathcal{L}(\mathbf{0})$ denoting the enabled channels and jammed channels, respectively. For example, one has $\mathcal{L}(\mathbf{1})=\{2,3\}$ and $\mathcal{L}(\mathbf{0})=\{1,4\}$, if $\mathcal{L}=[0\:1\:1\:0]$.
  \item $\mathcal{S}_{all}$ denotes all the states of $\mathcal{L}$. For system (\ref{PPLS}) with $n$ channels, there are $2^{n}$ elements in $\mathcal{S}_{all}$.
  \item $\mathcal{S}_{force}$ denotes the set of the channels' states that the attack flow $\tilde{R}$ can induce forcibly by violating the delay condition $\big(R^{(j)}+\tilde{R}^{(j)}\big)\tau< S^{(j)}$ in (\ref{PPLS-opt-bi-logic}), which is solely dependent on $\tilde{R}^{(j)}$. For a set $\{i_{1},i_{2},\ldots,i_{p}\}$ where $i_{h} \in \{1,2,\ldots,n\}$ for all $h \in \{1,2,\ldots,p\}$, if it holds
      \begin{equation}\label{S-force}
      \resizebox{0.9\hsize}{!}{$
        \sum\limits_{h=1}^{p} \Big(\frac{S^{\left(i_{h}\right)}}{\tau}-R^{\left(i_{h}\right)}\Big)\leq \tilde{R}_{\Sigma},\ \frac{S^{\left(i_{h}\right)}}{\tau}-R^{\left(i_{h}\right)}\leq \bar{R}^{(i_{h})},
      $}
      \end{equation}
      then
      \begin{equation}\label{L-force}
        \mathcal{L}_{f}=
        \begin{bNiceMatrix}[last-row,nullify-dots]
        \text{?} & \Cdots & \text{?} & 0              & \text{?} & \Cdots & \text{?} & 0              & \text{?} & \Cdots & \text{?}\\
        & & & i_{1}\text{th} & \Cdots & & & i_{p}\text{th} & & &
        \end{bNiceMatrix}
      \end{equation}
      is an element in $\mathcal{S}_{force}$ that is obtained by verifying (\ref{S-force}) for all the elements in $\mathcal{S}_{all}$, respectively. In (\ref{L-force}), the $i_{1},i_{2},\ldots,i_{p}$th channels are jammed forcibly by violating the delay condition, and `?' denotes the channel state dependent on the following bandwidth allocation.
  \item $\mathcal{S}_{safe}\mid\mathcal{L}_{f}$ denotes the set of channels' states that the defender can definitely reach by allocating bandwidth under $\mathcal{L}_{f}$, that is, when the attack flow $\tilde{R}$ jams the channels $\mathcal{L}_{f}(\mathbf{0})$ forcibly by violating the delay condition. For a set $\{j_{1},j_{2},\ldots,j_{q}\}$ which satisfies $\{j_{1},j_{2},\ldots,j_{q}\} \cap \{i_{1},i_{2},\ldots,i_{p}\}=\emptyset$, if it holds
      \begin{equation}\label{S-safe}
      \resizebox{0.83\hsize}{!}{$
      \begin{aligned}
      W_{\Sigma}\geq\min\bigg\{&\tilde{R}_{\Sigma}-\sum_{j\in\mathcal{L}_{f}(\mathbf{0})}\Big(\frac{S^{(j)}}{\tau}-R^{(j)}\Big)+\sum_{h=1}^{q}R^{(j_{h})},\\
      &\sum_{h=1}^{q} \left(R^{(j_{h})}+\bar{R}^{(j_{h})}\right)\bigg\},
      \end{aligned}
      $}
      \end{equation}
      then
      \begin{equation}\label{L-safe}
      \resizebox{0.8\hsize}{!}{$
        \mathcal{L}_{s}=
        \begin{bNiceMatrix}[last-row,nullify-dots]
        0 & \Cdots & 0 & 1              & 0      & \Cdots & 0 & 1              & 0 & \Cdots & 0\\
          &        &   & j_{1}\text{th} & \Cdots &        &   & j_{q}\text{th} &   &        &
        \end{bNiceMatrix}
      $}
      \end{equation}
      is an element in $\mathcal{S}_{safe}\mid\mathcal{L}_{f}$ that is obtained by verifying (\ref{S-safe}) for all possible channels' states under $\mathcal{L}_{f}$. In (\ref{L-safe}), the $j_{1},j_{2},\ldots,j_{q}$th entries are $1$, the others are $0$.
  \item $\mathcal{S}_{opt}$ denotes the subset of $\mathcal{S}_{all}$, in which the elements are the potential optimal channels' states of Problem (\ref{PPLS-opt-bi}).
\end{itemize}

Given a subsystem $(A_{i},B_{i})$, we can obtain the corresponding exact optimal objective value $\bar{\beta}_{i}$ of Problem (\ref{PPLS-opt-bi}) by executing SEA.
\begin{breakablealgorithm}\label{algorithm}
\renewcommand{\thealgorithm}{}
\caption{}
\begin{enumerate}[leftmargin=11mm]\renewcommand{\labelenumi}{\textbf{Step \theenumi}}
  \item Initialize $\mathcal{S}_{opt}=\mathcal{S}_{all}$.
  \item Assign each $\mathcal{L} \in \mathcal{S}_{all}$ to the main diagonal of $L_{i}$ one by one, and solve (\ref{PPLS-opt}) to obtain the corresponding optimal exponential orders $\beta_{i}^{*}\mid\mathcal{L}$, respectively.
  \item Determine the set $\mathcal{S}_{force}$ with (\ref{S-force}) and (\ref{L-force}).
  \item For all $\mathcal{L}_{f} \in \mathcal{S}_{force}$, set the undetermined entries `?' as $1$ to obtain $\beta_{i}^{*}\mid \mathcal{L}_{f}$ according to the result of \textbf{Step~2}, then update
      \begin{equation}\label{tilde-beta}
      \tilde{\beta}_{i}^{*}\triangleq\max\limits_{\mathcal{L}_{f}\in \mathcal{S}_{force}}\left\{\beta_{i}^{*}\mid\mathcal{L}_{f}\right\}.
      \end{equation}
  \item Given a $\mathcal{L}_{f} \in \mathcal{S}_{force}$, determine the set $\mathcal{S}_{safe}\mid\mathcal{L}_{f}$ with (\ref{S-safe}) and (\ref{L-safe}), then find $\hat{\beta}_{i}^{*}\mid\mathcal{L}_{f}\triangleq\min\limits_{\mathcal{L}_{s}\in\mathcal{S}_{safe}\mid\mathcal{L}_{f}} \left\{\beta_{i}^{*}\mid\mathcal{L}_{s}\right\}$ for $\mathcal{S}_{safe}\mid\mathcal{L}_{f}$. Update
      \begin{equation}\label{S-all-Lf}
      \mathcal{S}_{all}\mid \mathcal{L}_{f}\triangleq \{\mathcal{L}\in\mathcal{S}_{all}:\mathcal{L}(\mathbf{1})\cap\mathcal{L}_{f}(\mathbf{0})=\emptyset\},
      \end{equation}
      \begin{equation}\label{S-opt-Lf}
      \begin{aligned}
        &\mathcal{S}_{opt}\mid\mathcal{L}_{f}\\
        \triangleq&\big\{\mathcal{L}_{o} \in \mathcal{S}_{all}\mid \mathcal{L}_{f}:\tilde{\beta}_{i}^{*} \leq \beta_{i}^{*}\mid\mathcal{L}_{o} \leq \hat{\beta}_{i}^{*}\mid\mathcal{L}_{f}\big\}.
      \end{aligned}
      \end{equation}
  \item For the given $\mathcal{L}_{f}$, verify the following condition for all the $\mathcal{L}_{o}\in\mathcal{S}_{opt}\mid\mathcal{L}_{f}$ one by one in an ascending order of $\beta_{i}^{*}\mid\mathcal{L}_{o}$: For a $\mathcal{L}_{o} \in \mathcal{S}_{opt}\mid\mathcal{L}_{f}$, define
      \begin{equation}\label{opt-test-S}
      \resizebox{0.88\hsize}{!}{$
      \begin{aligned}
        &\mathcal{S}_{\mathcal{L}_{o}}\mid\mathcal{L}_{f}\\
        \triangleq&\left\{\mathcal{L}^{\prime}\in (\mathcal{S}_{all}\mid \mathcal{L}_{f})\backslash \mathcal{L}_{o}:\beta_{i}^{*}\mid\mathcal{L}_{f}\leq\beta_{i}^{*}\mid\mathcal{L}^{\prime}< \beta_{i}^{*}\mid\mathcal{L}_{o}\right\}.
      \end{aligned}
      $}
      \end{equation}
      If under (\ref{PPLS-opt-bi-attack}) and
      \begin{equation}\label{opt-test-if}
      \resizebox{0.88\hsize}{!}{$
      W_{\Sigma}<\sum\limits_{p\in \mathcal{L}_{o}(\mathbf{1})}\big(R^{(p)}+\tilde{R}^{(p)}\big)\leq\tilde{R}_{\Sigma}-\sum\limits_{j\in\mathcal{L}_{f}(\mathbf{0})}\Big(\frac{S^{(j)}}{\tau}-R^{(j)}\Big),
      $}
      \end{equation}
      there does not exist an attack flow $\tilde{R}$ such that
      \begin{equation}\label{opt-test}
      \forall \mathcal{L}^{\prime}\in\mathcal{S}_{\mathcal{L}_{o}}\mid\mathcal{L}_{f},\ \sum_{q\in \mathcal{L}^{\prime}(\mathbf{1})}\big(R^{(q)}+\tilde{R}^{(q)}\big)> W_{\Sigma},
      \end{equation}
      stop verifying the remaining elements in $\mathcal{S}_{opt}\mid\mathcal{L}_{f}$, and set
      \begin{equation}\label{opt-test-bar}
        \bar{\beta}_{i}\mid\mathcal{L}_{f}= \beta_{i}^{*}\mid\mathcal{L}_{o},
      \end{equation}
      where $\bar{\beta}_{i}\mid\mathcal{L}_{f}$ denotes the maximum exponential order that the attack flow $\tilde{R}$ can induce under $\mathcal{L}_{f}$, that is, such that
      \begin{equation}
        \forall j\in \mathcal{L}_{f}(\mathbf{0}),\ \tilde{R}^{(j)}\geq \frac{S^{(j)}}{\tau}-R^{(j)}.
      \end{equation}
  \item Repeat \textbf{Steps 5, 6} for each $\mathcal{L}_{f}\in \mathcal{S}_{force}$ to obtain $\bar{\beta}_{i}\mid\mathcal{L}_{f}$, then we have the optimal objective value of (\ref{PPLS-opt-bi})
      \begin{equation}\label{opt-syn}
        \bar{\beta}_{i}=\max\limits_{\mathcal{L}_{f}\in \mathcal{S}_{force}}\left\{\bar{\beta}_{i}\mid\mathcal{L}_{f}\right\}.
      \end{equation}
\end{enumerate}
\end{breakablealgorithm}

Before proving the optimality of the algorithm, we give the following lemma.

\begin{lemma}\label{lemma-test}
$\bar{\beta}_{i}\mid \mathcal{L}_{f}$ in \textbf{Step 6} is optimal for Problem (\ref{PPLS-opt-bi}) under a given $\mathcal{L}_{f}$.\hfill $\Box$
\end{lemma}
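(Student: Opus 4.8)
The plan is to show that the value $\bar{\beta}_i\mid\mathcal{L}_f$ returned by \textbf{Step 6} equals the optimum of (\ref{PPLS-opt-bi}) when the attacker is constrained to force exactly the jamming pattern $\mathcal{L}_f(\mathbf{0})$. First I would reduce the bilevel problem to a finite combinatorial game over channel states. The point is that, once $L_i$ is fixed to $\mathrm{diag}(\mathcal{L})$, the LMIs in (\ref{PPLS-opt-bi}) no longer involve $\tilde{R}$ or $W$, so the inner minimization over $(K_i,\beta_i)$ returns a number $\beta_i^*\mid\mathcal{L}$ depending only on $\mathcal{L}$ --- exactly the table built in \textbf{Step 2}. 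Consequently $\tilde{R}$ and $W$ act on the problem only by deciding which states are attainable: under a flow $\tilde{R}$ that forces $\mathcal{L}_f(\mathbf{0})$, a state $\mathcal{L}$ is reachable for the defender if and only if $\mathcal{L}\in\mathcal{S}_{all}\mid\mathcal{L}_f$ in (\ref{S-all-Lf}) and $\sum_{q\in\mathcal{L}(\mathbf{1})}(R^{(q)}+\tilde{R}^{(q)})\le W_{\Sigma}$; the defender then picks the reachable state of least $\beta_i^*\mid\mathcal{L}$. Thus $\bar{\beta}_i\mid\mathcal{L}_f$ is the maximum over admissible $\tilde{R}$ of this minimum.

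Second, I would order the states of $\mathcal{S}_{all}\mid\mathcal{L}_f$ as $\mathcal{L}^{(1)},\mathcal{L}^{(2)},\dots$ by ascending $\beta_i^*\mid\mathcal{L}^{(m)}$ and argue by a prefix-blocking principle. Call $\mathcal{L}$ \emph{blocked} by $\tilde{R}$ if $\sum_{q\in\mathcal{L}(\mathbf{1})}(R^{(q)}+\tilde{R}^{(q)})>W_{\Sigma}$, i.e.\ rendered unreachable. For a fixed $\tilde{R}$ the defender's value is $\beta_i^*\mid\mathcal{L}^{(m)}$ for the smallest index $m$ whose state is not blocked; since blocking a state also blocks any state with a larger enabled set, the family of blockable index-prefixes is downward closed, and the attacker optimally blocks the longest blockable prefix. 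I would then identify the feasibility test of \textbf{Step 6} with the question ``can the prefix through $\mathcal{L}_o$ be blocked at once?'': (\ref{opt-test}) blocks every strictly better state in $\mathcal{S}_{\mathcal{L}_o}\mid\mathcal{L}_f$, while (\ref{opt-test-if}) blocks $\mathcal{L}_o$ itself and certifies that the needed flow lies within the residual budget in (\ref{PPLS-opt-bi-attack}) left after the minimal expenditure $\sum_{j\in\mathcal{L}_f(\mathbf{0})}(S^{(j)}/\tau-R^{(j)})$ used to force $\mathcal{L}_f(\mathbf{0})$. Scanning in ascending order, ``no such $\tilde{R}$'' first occurs at the smallest index $m^{*}$ for which the prefix through $\mathcal{L}_o=\mathcal{L}^{(m^{*})}$ is not blockable, while the previous iteration certifies that the shorter prefix is.

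Third, I would close both inequalities from this. For $\bar{\beta}_i\mid\mathcal{L}_f\ge\beta_i^*\mid\mathcal{L}_o$, blockability of the shorter prefix yields an explicit admissible $\tilde{R}$ forcing $\mathcal{L}_f(\mathbf{0})$ under which every reachable state satisfies $\beta_i^*\mid\mathcal{L}\ge\beta_i^*\mid\mathcal{L}_o$. For $\bar{\beta}_i\mid\mathcal{L}_f\le\beta_i^*\mid\mathcal{L}_o$, the failure of blockability means every admissible $\tilde{R}$ leaves at least one state with $\beta_i^*\mid\mathcal{L}\le\beta_i^*\mid\mathcal{L}_o$ reachable, so the defender always attains at most $\beta_i^*\mid\mathcal{L}_o$. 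It remains to see the scan is well-posed: by (\ref{S-safe}) every state in $\mathcal{S}_{safe}\mid\mathcal{L}_f$ is reachable against any admissible attack, so the optimum cannot exceed $\hat{\beta}_i^*\mid\mathcal{L}_f$, which both justifies restricting the search to the window $\mathcal{S}_{opt}\mid\mathcal{L}_f$ in (\ref{S-opt-Lf}) and guarantees the stopping index $m^{*}$ is reached. I expect the main obstacle to be the bookkeeping in the second step: proving that the linear conditions (\ref{opt-test-if})--(\ref{opt-test}) encode simultaneous blocking exactly --- that the attacker can concentrate the residual budget on $\bigcup_{\mathcal{L}'}\mathcal{L}'(\mathbf{1})$ so as to overwhelm $W_{\Sigma}$ on every better state at the same time, while keeping each undetermined channel below its own forcing threshold --- and not merely block the states one at a time.
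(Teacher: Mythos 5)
Your proposal is correct and follows essentially the same route as the paper's proof: both reduce the bilevel problem to a finite game over the channel states (the inner LMIs depend only on $\mathcal{L}$, so $\tilde{R}$ and $W$ matter only through reachability), and both pin down $\bar{\beta}_{i}\mid\mathcal{L}_{f}$ by squeezing it between the last candidate for which the \textbf{Step 6} feasibility test succeeds (giving $\bar{\beta}_{i}\mid\mathcal{L}_{f}>\beta_{i}^{*}\mid\mathcal{L}_{o}^{-}$) and the first for which it fails (giving $\bar{\beta}_{i}\mid\mathcal{L}_{f}\leq\beta_{i}^{*}\mid\mathcal{L}_{o}$). The ``main obstacle'' you flag at the end is not an additional proof burden: condition (\ref{opt-test}) is already universally quantified over $\mathcal{L}^{\prime}$ for a single $\tilde{R}$, so the \textbf{Step 6} test is by definition a test of simultaneous blocking rather than one-at-a-time blocking.
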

\begin{proof}
Define
\begin{equation}\label{Lo-}
\mathcal{L} _{o}^{-}\triangleq\mathop{\arg\max}\limits_{\mathcal{L}\in\mathcal{S}_{opt}\mid\mathcal{L}_{f}}\{\beta_{i}^{*}\mid\mathcal{L}:\beta_{i}^{*}\mid\mathcal{L}<\beta_{i}^{*}\mid\mathcal{L}_{o}\}.
\end{equation}
Note that $\mathcal{L}_{o}^{-}$ is verified before $\mathcal{L}_{o}$ in \textbf{Step 6}, hence $\mathcal{L}_{o}^{-}$ satisfies (\ref{opt-test}). That is, there exists a attack flow $\tilde{R}$ such that (\ref{opt-test}) for all the $\mathcal{L}^{\prime}\in\mathcal{S}_{\mathcal{L}_{o}^{-}}\mid\mathcal{L}_{f}$ under (\ref{PPLS-opt-bi-attack}) and
\begin{equation}
\resizebox{0.88\hsize}{!}{$
W_{\Sigma}<\sum\limits_{p\in \mathcal{L}_{o}^{-}(\mathbf{1})}\big(R^{(p)}+\tilde{R}^{(p)}\big)\leq\tilde{R}_{\Sigma}-\sum\limits_{j\in\mathcal{L}_{f}(\mathbf{0})}\Big(\frac{S^{(j)}}{\tau}-R^{(j)}\Big).
$}
\end{equation}
Under such an attack flow $\tilde{R}$, the defender will reach a channels' state with an exponential order larger than $\beta_{i}^{*}\mid \mathcal{L}_{o}^{-}$ after bandwidth allocation. It implies that $\bar{\beta}_{i}\mid \mathcal{L}_{f}>\beta_{i}^{*}\mid \mathcal{L}_{o}^{-}$. Furthermore, it holds infeasibility of (\ref{opt-test}) under (\ref{PPLS-opt-bi-attack}) and (\ref{opt-test-if}) for $\mathcal{L}_{o}$. It follows that $\bar{\beta}_{i}\mid \mathcal{L}_{f}\leq\beta_{i}^{*}\mid \mathcal{L}_{o}$. Since $\beta_{i}^{*}\mid \mathcal{L}_{o}^{-}<\bar{\beta}_{i}\mid \mathcal{L}_{f}\leq\beta_{i}^{*}\mid \mathcal{L}_{o}$, recalling the definition (\ref{Lo-}) of $\mathcal{L}_{o}^{-}$, we can obtain $\bar{\beta}_{i}\mid \mathcal{L}_{f}=\beta_{i}^{*}\mid \mathcal{L}_{o}$.
\end{proof}

\begin{theorem}\label{theo-algorithm}
The result $\bar{\beta}_{i}$ from SEA is the exact optimal value of Problem (\ref{PPLS-opt-bi}).
\end{theorem}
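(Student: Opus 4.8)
The plan is to treat Problem (\ref{PPLS-opt-bi}) as a two-stage game and to reduce its outer maximization over the continuous attack flow $\tilde{R}$ to a maximization over the finite collection of force-jamming patterns $\mathcal{S}_{force}$, invoking Lemma \ref{lemma-test} to evaluate each pattern. The guiding observation is that any feasible $\tilde{R}$ splits the channels into two groups: those it force-jams by violating the delay condition, i.e. the channels $j$ with $(R^{(j)}+\tilde{R}^{(j)})\tau \geq S^{(j)}$, and those it can only contest through the bandwidth competition governed by the allocation condition. The force-jammed group is exactly $\mathcal{L}_f(\mathbf{0})$ for a uniquely determined pattern $\mathcal{L}_f$, so the attack-flow polytope described by (\ref{PPLS-opt-bi-attack}) is partitioned according to the induced pattern, and the inner minimization depends on $\tilde{R}$ only through this pattern together with the residual flow placed on the undetermined channels.

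First I would confirm that Step 3 enumerates every realizable pattern, i.e. that a pattern $\mathcal{L}_f$ arises from some feasible $\tilde{R}$ if and only if (\ref{S-force}) holds. This is direct: forcing channels $i_1,\dots,i_p$ requires $\tilde{R}^{(i_h)} \geq S^{(i_h)}/\tau - R^{(i_h)}$, and by (\ref{PPLS-opt-bi-attack}) these demands can be met simultaneously exactly when their sum does not exceed $\tilde{R}_{\Sigma}$ and each does not exceed $\bar{R}^{(i_h)}$, which is precisely (\ref{S-force}); the empty pattern is admissible vacuously, and feasibility of the defender's response (at least one enabled channel) is guaranteed by Assumption \ref{assump-one} and Remark \ref{remark-one}. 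Grouping the outer maximum by pattern then gives
\[
\bar{\beta}_i \;=\; \max_{\tilde{R}}\ \min_{W,L_i,K_i,\beta_i}\beta_i
\;=\; \max_{\mathcal{L}_f \in \mathcal{S}_{force}}\Big(\max_{\tilde{R}\ \text{inducing}\ \mathcal{L}_f}\ \min_{W,L_i,K_i,\beta_i}\beta_i\Big),
\]
where the bracketed quantity is exactly the restricted max-min solved under the fixed pattern $\mathcal{L}_f$.

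By Lemma \ref{lemma-test}, that bracketed value equals $\bar{\beta}_i\mid\mathcal{L}_f$ as computed in Step 6, so substituting and taking the outer maximum reproduces the synthesis formula (\ref{opt-syn}); since $\mathcal{S}_{force}$ is finite and each $\bar{\beta}_i\mid\mathcal{L}_f$ is attained by a finite enumeration, the overall maximum is attained and equals the true optimum of Problem (\ref{PPLS-opt-bi}). The main obstacle is justifying the decomposition rigorously: I must show that, once the pattern is fixed, the residual-flow-versus-bandwidth competition is exactly the sub-game that Lemma \ref{lemma-test} resolves, namely that the attacker's remaining budget $\tilde{R}_{\Sigma} - \sum_{j\in\mathcal{L}_f(\mathbf{0})}(S^{(j)}/\tau - R^{(j)})$ is spent on the undetermined channels while the defender distributes $W_{\Sigma}$ to reach the safe states $\mathcal{S}_{safe}\mid\mathcal{L}_f$, and that this is faithfully captured by the feasibility test (\ref{opt-test-if})--(\ref{opt-test}). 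It then remains to check that the pruning performed in (\ref{S-opt-Lf}) via the global lower bound $\tilde{\beta}_i^*$ of (\ref{tilde-beta}) and the defender's guaranteed upper bound $\hat{\beta}_i^*\mid\mathcal{L}_f$ discards only states that cannot be optimal, so that restricting the search in Step 6 to $\mathcal{S}_{opt}\mid\mathcal{L}_f$ loses nothing.
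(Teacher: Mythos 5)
Your proposal takes essentially the same route as the paper's proof: partition the attack-flow space by the force-jamming pattern $\mathcal{L}_f \in \mathcal{S}_{force}$, invoke Lemma \ref{lemma-test} to obtain the restricted optimum $\bar{\beta}_i\mid\mathcal{L}_f$ for each pattern, and synthesize via (\ref{opt-syn}). Your version is in fact more explicit than the paper's three-sentence argument (e.g., verifying that Step 3 enumerates all realizable patterns, and flagging the pruning in (\ref{S-opt-Lf}) — a point the paper defers to the remark following the theorem), but the underlying decomposition and key lemma are identical.
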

\begin{proof}
SEA partitions the solution space of (\ref{PPLS-opt-bi}) into finite parts with respect to all the possible states of $\mathcal{L}_{f}$. Then according to Lemma \ref{lemma-test}, the local optimum $\bar{\beta}_{i}\mid \mathcal{L}_{f}$ of each part can be obtained by \textbf{Step 6}. The global optimum $\bar{\beta}_{i}$ can be obtained by synthesizing all the local optimums with (\ref{opt-syn}).
\end{proof}

\begin{remark}
In (\ref{S-opt-Lf}), the channels' states that cannot be globally optimal for (\ref{PPLS-opt-bi}) are filtered out from $\mathcal{S}_{opt}\mid \mathcal{L}_{f}$. For a given $\mathcal{L}_{f}$, if the local optimum is with an exponential order less than $\tilde{\beta}_{i}^{*}$ in (\ref{tilde-beta}), the $\bar{\beta}_{i}\mid \mathcal{L}_{f}$ obtained from \textbf{Step 6} would be larger than the real local optimum. However, it will not affect the exactness of the global optimum $\bar{\beta}_{i}$ obtained by SEA, since the obtained $\bar{\beta}_{i}\mid \mathcal{L}_{f}$ satisfies $\bar{\beta}_{i}\mid \mathcal{L}_{f}=\tilde{\beta}_{i}^{*}\leq \bar{\beta}_{i}$ in this case.
\end{remark}

\begin{remark}
The computation workload of SEA is rather heavy, but Problem (\ref{PPLS-opt-bi}) is solved with SEA only once offline for each subsystem, instead of being solved online at each time $k$. Moreover, to improve the efficiency of \textbf{Step 6}, $\tilde{\beta}_{i}^{*}$, $\hat{\beta}_{i}^{*}\mid\mathcal{L}_{f}$ are determined in \textbf{Steps 4, 5} to narrow $\mathcal{S}_{opt}\mid \mathcal{L}_{f}$, the set of potential optimal channels' states under $\mathcal{L}_{f}$.
\end{remark}

\subsection{Stability analysis}
As discussed in the start of Section \ref{section-stability}, $\bar{\beta}_{i}$ obtained from Problem (\ref{PPLS-opt-bi}) is the supremum of the exponential order $\beta_{i}^{*}(k)$ under online defense (\ref{PPLS-opt}) against DoS attacks. The exponential order $\alpha_{i}$ in the absence of attacks is established in Lemma \ref{PPLS-lemma-alpha}. With Assumption \ref{PPLS-assump-intensity}, $\alpha_{i}$ and $\bar{\beta}_{i}$, Theorem \ref{PPLS-theo-stability} is proposed to analyze the stability of system (\ref{PPLS}) under online defense (\ref{PPLS-opt}).

\begin{theorem}\label{PPLS-theo-stability}
Consider a multi-channel PPLS (\ref{PPLS}) under online defense (\ref{PPLS-opt}) against DoS attacks. Given $0\leq\delta_{i}\leq 1$ from Assumption \ref{PPLS-assump-intensity}. Establish $\alpha_{i}$ and $P_{i}$ such that (\ref{PPLS-lemma-alpha-LMI-1}) and (\ref{PPLS-lemma-alpha-LMI-2}) in Lemma \ref{PPLS-lemma-alpha}, then obtain $\bar{\beta}_{i}>0$ from Problem (\ref{PPLS-opt-bi}) with $P_{i}$, $i=1,2,\ldots,s$. If there exists $0<\chi<1$ such that
\begin{equation}\label{PPLS-cond-sta-order}
\prod_{i=1}^{s}\alpha_{i}^{(1-\delta_{i})T_{i}}{\bar{\beta}_i}^{\delta_{i}T_{i}}\leq \chi^{2T},
\end{equation}
then system (\ref{PPLS}) is $\chi$-exponentially stable.
\end{theorem}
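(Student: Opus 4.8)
The plan is to use the piecewise-quadratic Lyapunov function $V(k)$ of (\ref{PPLS-Lya}) and to track its multiplicative growth step by step across one fundamental period, then iterate over periods. First I would record the two elementary per-step estimates that are already available: at an attack-free instant the controller from Lemma~\ref{PPLS-lemma-alpha} gives $V(k+1)\le\alpha_i V(k)$, while at an instant under attack the online optimum together with Lemma~\ref{PPLS-lemma-beta} gives $V(k+1)\le\beta_i^*(k)V(k)\le\bar\beta_i V(k)$, since $\bar\beta_i$ is by construction the supremum of $\beta_i^*(k)$ over all admissible attack flows. A preliminary observation I would make explicit is that $V(k)$ is continuous across switching instants: from (\ref{PPLS-Lya}) one has $P_i(\ell T+k_i)=P_i=P_{i+1}(\ell T+k_i)$, so the terminal value of $V$ on the $i$th dwell coincides with its initial value on the $(i+1)$th dwell, which is what lets the per-step factors be chained without interface jumps.

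Next I would accumulate these factors over the $i$th dwell-time. It has exactly $T_i$ one-step transitions, of which at most $\tilde T_{\ell,i}\le\tilde T_i=\delta_i T_i$ are attacked (Assumption~\ref{PPLS-assump-intensity}) and hence at least $(1-\delta_i)T_i$ are attack-free. Writing the net factor as $\alpha_i^{T_i-\tilde T_{\ell,i}}\bar\beta_i^{\tilde T_{\ell,i}}=\alpha_i^{T_i}(\bar\beta_i/\alpha_i)^{\tilde T_{\ell,i}}$ and using $\bar\beta_i\ge\alpha_i$ (which holds because jamming a channel strips control authority relative to the attack-free design realizing $\alpha_i$, so the attained $\beta$ cannot fall below $\alpha_i$), the factor is nondecreasing in $\tilde T_{\ell,i}$ and is therefore maximized at $\tilde T_{\ell,i}=\delta_iT_i$, giving $V(\ell T+k_i)\le\alpha_i^{(1-\delta_i)T_i}\bar\beta_i^{\delta_iT_i}V(\ell T+k_{i-1})$. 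Chaining $i=1,\dots,s$ over one period and using continuity at the switches yields $V((\ell+1)T)\le\mu\,V(\ell T)$ with $\mu\triangleq\prod_{i=1}^s\alpha_i^{(1-\delta_i)T_i}\bar\beta_i^{\delta_iT_i}$, and an induction on $\ell$ gives $V(\ell T)\le\mu^\ell V(0)$. Invoking hypothesis (\ref{PPLS-cond-sta-order}), $\mu\le\chi^{2T}$, this becomes $V(\ell T)\le\chi^{2T\ell}V(0)$.

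To pass from period boundaries to arbitrary $k$, I would write $k=\ell T+r$ with $0\le r<T$ and absorb the incomplete tail into a uniform constant: every single step multiplies $V$ by at most $\Gamma_0\triangleq\max_i\max\{\alpha_i,\bar\beta_i,1\}\ge 1$, so $V(k)\le\Gamma_0^{\,T}V(\ell T)$ for any $k$ in the $\ell$th period. Combining with $\ell T=k-r\ge k-T+1$ and $0<\chi<1$, one obtains $V(k)\le\Gamma_0^{\,T}\chi^{2(k-T+1)}V(0)=c_0\,\chi^{2k}V(0)$ with $c_0\triangleq\Gamma_0^{\,T}\chi^{2(1-T)}$. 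Finally I would convert $V$ back to the state norm using uniform eigenvalue bounds: since each $P_i(k)$ is a convex combination of $P_{i-1},P_i\succ0$, one has $\underline\lambda\|x(k)\|^2\le V(k)$ and $V(0)\le\bar\lambda\|x(0)\|^2$ with $\underline\lambda\triangleq\min_i\underline\lambda(P_i)>0$ and $\bar\lambda\triangleq\max_i\bar\lambda(P_i)$; hence $\|x(k)\|\le\sqrt{c_0\bar\lambda/\underline\lambda}\,\chi^{k}\|x(0)\|$, which is exactly $\chi$-exponential stability with $k_0=0$.

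The routine parts are the per-step estimates, already supplied by Lemmas~\ref{PPLS-lemma-alpha} and~\ref{PPLS-lemma-beta}, and the eigenvalue sandwiching. The step I expect to be the main obstacle is the per-dwell aggregation: replacing the arbitrarily interleaved real attack pattern by the single worst-case count $\delta_iT_i$ is legitimate only because the one-step factors commute and because $\bar\beta_i\ge\alpha_i$ makes the product monotone in the number of attacked steps, so I would want to justify $\bar\beta_i\ge\alpha_i$ carefully rather than assume it. A secondary point needing attention is the uniform treatment of the final incomplete period, so that the constant $c$ in the stability definition is genuinely independent of $k$.
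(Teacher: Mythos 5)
Your proposal follows essentially the same route as the paper's proof: the same per-step bounds from Lemmas~\ref{PPLS-lemma-alpha} and~\ref{PPLS-lemma-beta}, the same per-dwell aggregation into the factor $\alpha_i^{(1-\delta_i)T_i}\bar\beta_i^{\delta_i T_i}$, iteration over periods, a uniform constant absorbing the incomplete tail, and an eigenvalue sandwich at the end (the paper uses per-subsystem constants $\theta_i$ where you use a single $\Gamma_0$, which is immaterial). The only substantive remark concerns the obstacle you yourself flagged: replacing the actual attack count $\tilde T_{\ell,i}$ by its bound $\delta_i T_i$ does require $\bar\beta_i\ge\alpha_i$, but your heuristic justification (``jamming strips control authority, so $\beta$ cannot fall below $\alpha_i$'') is not valid as stated, because $\alpha_i$ is a \emph{prescribed} decay rate in Lemma~\ref{PPLS-lemma-alpha} whereas $\beta_i^*$ is \emph{minimized} over $K_i(k)$; indeed in the paper's own example $\beta_1^*\mid[1\:1\:1\:1]=1.2689<\alpha_1=1.3$, so a sufficiently weak attacker would yield $\bar\beta_i<\alpha_i$. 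The paper's proof performs the same replacement silently, so this is not a defect of your argument relative to the paper's, and it is easily repaired (e.g., bound each attacked step by $\max\{\alpha_i,\bar\beta_i\}$ so the per-dwell factor is at most $\alpha_i^{(1-\delta_i)T_i}\max\{\alpha_i,\bar\beta_i\}^{\delta_i T_i}$, or simply note that condition (\ref{PPLS-cond-sta-order}) is typically imposed with $\bar\beta_i\ge\alpha_i$ in the regime of interest).
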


\begin{proof}
Notice that
\begin{equation}\label{proof-sta-difference}
V_{i}(k+1)\leq\begin{cases}
\alpha_{i}V_{i}(k), & \text{DoS attacks stop},\\
\beta_{i}^{*}(k)V_{i}(k)\leq \bar{\beta}_{i}V_{i}(k), & \text{DoS attacks occur}.
\end{cases}
\end{equation}
With Assumption \ref{PPLS-assump-intensity}, the evolution of $V(k)$ over the periods follows that
\begin{equation}
\begin{aligned}
V(\ell T)&=V_{1}(\ell T)\\
&\leq \alpha_{s}^{(1-\delta_{s})T_{s}}{\bar{\beta}_s}^{\delta_{s}T_{s}} V_{s}((\ell-1) T+k_{s-1})\\
&\leq \left(\prod_{i=1}^{s}\alpha_{i}^{(1-\delta_{i})T_{i}}\bar{\beta}_{i}^{\delta_{i}T_{i}}\right)V((\ell-1) T),
\end{aligned}
\end{equation}
where $\delta_{i}=\frac{\tilde{T}_{i}}{T_{i}}$. Then with (\ref{PPLS-cond-sta-order}), one has
\begin{equation}
\begin{aligned}
V(\ell T)&\leq \left(\prod_{i=1}^{s}\alpha_{i}^{(1-\delta_{i})T_{i}}\bar{\beta}_{i}^{\delta_{i}T_{i}}\right)^{\ell}V_{1}(0)\\
&\leq\chi^{2\ell T}V(0).
\end{aligned}
\end{equation}
Since $V(0)=x^{\mathrm{T}}(0)P_{s}x(0)$ and $V(\ell T)=x^{\mathrm{T}}(\ell T)P_{s}x(\ell T)$, one has
\begin{equation}\label{proof-lTp-0}
\|x(\ell T)\|\leq \sqrt{\frac{\bar{\lambda}(P_{s})}{\underline{\lambda}(P_{s})}}\chi^{\ell T}\|x(0)\|.
\end{equation}

Furthermore, for $k \in \{\ell T+k_{i-1},\ldots,\ell T+k_{i}-1\}$, notice that $P_{i}(k)$ varies linearly with $k$ in (\ref{PPLS-Lya}), one has
\begin{equation}
\|x(k+1)\|\leq \sqrt{\bar{\beta}_{i} \frac{\max\left\{\bar{\lambda}(P_{i}),\bar{\lambda}(P_{i-1})\right\}}{\min\left\{\underline{\lambda}(P_{i}),\underline{\lambda}(P_{i-1})\right\}}}\|x(k)\|.
\end{equation}
Then with (\ref{proof-lTp-0}), we can obtain
\begin{equation}
\begin{aligned}
\|x(k)\|&\leq \left(\prod_{i=1}^{s}\theta_{i}\right) \|x(\ell T)\|\\
&\leq \left(\prod_{i=1}^{s}\theta_{i}\right)\sqrt{\frac{\bar{\lambda}(P_{s})}{\underline{\lambda}(P_{s})}}\chi^{\ell T}\|x(0)\|\\
&=\left(\prod_{i=1}^{s}\theta_{i}\right)\sqrt{\frac{\bar{\lambda}(P_{s})}{\underline{\lambda}(P_{s})}}\frac{1}{\chi^{T}} \chi^{(\ell+1) T}\|x(0)\|\\
&\leq \left(\prod_{i=1}^{s}\theta_{i}\right)\sqrt{\frac{\bar{\lambda}(P_{s})}{\underline{\lambda}(P_{s})}}\frac{1}{\chi^{T}} \chi^{k}\|x(0)\|\\
&=c\chi^{k}\|x(0)\|,
\end{aligned}
\end{equation}
where
\begin{equation*}
\begin{aligned}
c&=\left(\prod_{i=1}^{s}\theta_{i}\right)\sqrt{\frac{\bar{\lambda}(P_{s})}{\underline{\lambda}(P_{s})}}\frac{1}{\chi^{T}},\\
\theta_{i}&=\max\left\{1, \left(\bar{\beta}_{i} \frac{\max\left\{\bar{\lambda}(P_{i}),\bar{\lambda}(P_{i-1})\right\}}{\min\left\{\underline{\lambda}(P_{i}),\underline{\lambda}(P_{i-1})\right\}}\right)^{\frac{T_{i}}{2}}\right\}.
\end{aligned}
\end{equation*}
Hence system (\ref{PPLS}) is $\chi$-exponentially stable.
\end{proof}

\section{Numerical examples}
In this section, with numerical examples, the effectiveness of the online cross-layered defense strategy, and the adaptivity of the strategy to attack flow and system dynamics are illustrated. And the resilience enhanced by the cross-layered framework is highlighted by comparison. Numerical computation is conducted by using YALMIP \cite{YALMIP} and MOSEK \cite{MOSEK}.

Consider a multi-channel PPLS $\Sigma$ in form of (\ref{PPLS}), which is composed of three subsystems:
\begin{equation*}
\resizebox{0.98\hsize}{!}{$
\begin{aligned}
A_{1}&=\begin{bmatrix}
 1.0526 & -0.0066 & -0.2211 & 0.2816\\
 -0.0500 & 1.1000 & -0.1500 & 0.2000\\
 -0.0395 & -0.0013 & 0.9658 & 0.1763\\
 -0.0224 & -0.0066 & -0.1461 & 1.2816
\end{bmatrix},\;
B_{1}=\begin{bmatrix}
 0.8000 & 0.4000\\
 0.4000 & 0.8000\\
 0.4000 & 0.4000\\
 0.6000 & 0.4000
\end{bmatrix},\\
A_{2}&=\begin{bmatrix}
 0.4434 & -1.6224 & 0.2039 & 1.8211\\
 -0.2461 & -0.4066 & 0.2776 & 1.1474\\
 -0.5395 & -1.3342 & 1.2237 & 1.5263\\
 -0.1592 & -1.2513 & 0.2355 & 1.9895
\end{bmatrix},\;
B_{2}=\begin{bmatrix}
 0.8000 & 0.4000\\
 0.4000 & 0.4000\\
 0.4000 & 0.8000\\
 0.6000 & 0.4000
\end{bmatrix},\\
A_{3}&=\begin{bmatrix}
 -0.5947 & 1.2421 & -0.2632 & 0.3079\\
 -1.3671 & 2.0132 & -0.1697 & 0.2618\\
 -1.7658 & 1.9737 & 0.3395 & 0.2263\\
 -1.4553 & 1.6579 & -0.5868 & 1.1921
\end{bmatrix},\;
B_{3}=\begin{bmatrix}
 0.4000 & 0.4000\\
 0.6000 & 0.4000\\
 0.8000 & 0.4000\\
 0.4000 & 0.8000
\end{bmatrix}.
\end{aligned}
$}
\end{equation*}

The dwell-times of subsystems are $T_{1}=4$, $T_{2}=5$, $T_{3}=6$. One can observe that all the subsystems are unstable and the 1st subsystem is non-stabilizable. Fig.~\ref{PPLS-open} presents the open-loop state response of system $\Sigma$ initialized with $x(0)=[2\:3.2\:1.3\:3]^{\mathrm{T}}$. As can be observed from the plot, system $\Sigma$ exhibits unstable open-loop behavior.

\vspace{-0.4cm}
\begin{figure}[htb]
\centering
\includegraphics[width=.43\textwidth]{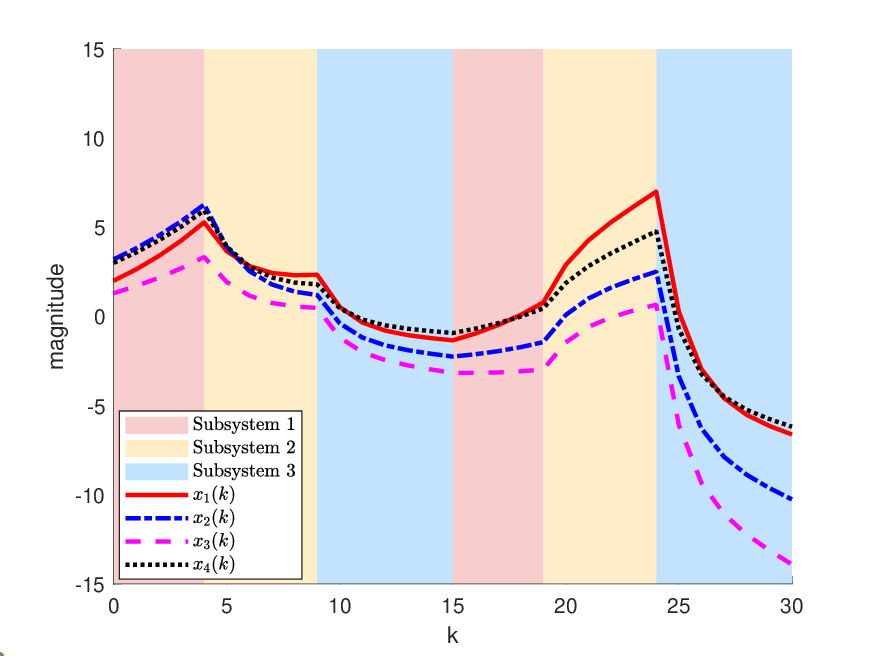}
\caption{Open-loop state response of system $\Sigma$}
\label{PPLS-open}
\end{figure}

\vspace{-0.4cm}
For the network parameters, given the buffer size $S=[10,10,10,10]$, normal flow $R=[5,5,5,5]$, total available bandwidth $W_{\Sigma}=20$ and allocation delay $\tau=0.5$. For the attack parameters, given the total available attack flow $\tilde{R}_{\Sigma}=20$, the upper bounds of attack flow $\bar{R}=[15,15,15,15]$, and the upper bounds of attack duration over a dwell-time $\tilde{T}_{1}=\tilde{T}_{2}=\tilde{T}_{3}=2$, which implies $\delta_{1}=\frac{1}{2}, \delta_{2}=\frac{2}{5}, \delta_{3}=\frac{1}{3}$. Furthermore, assume that the controller gain $K_{i}(k)=[k_{i,pq}(k)]_{n\times n}$ is bounded as $-100\leq k_{i,pq}(k)\leq 100$.

With Lemma \ref{PPLS-lemma-alpha}, given $\alpha_{1}=1.3$, $\alpha_{2}=0.4$, $\alpha_{3}=0.3$, we can obtain the Lyapunov matrices
\begin{equation*}
\resizebox{0.68\hsize}{!}{$
\begin{aligned}
P_{1}&=\begin{bmatrix}
 1.8998 & -0.8123 & -0.2081 & -0.6923\\
 -0.8123 & 5.4990 & -0.6873 & -1.8391\\
 -0.2081 & -0.6873 & 1.0702 & 0.1122\\
 -0.6923 & -1.8391 & 0.1122 & 4.9075
\end{bmatrix},\\
P_{2}&=\begin{bmatrix}
 2.9788 & -1.0600 & -0.3289 & -0.7628\\
 -1.0600 & 7.9116 & -0.6624 & -2.5654\\
 -0.3289 & -0.6624 & 1.0624 & -0.0210\\
 -0.7628 & -2.5654 & -0.0210 & 3.9751
\end{bmatrix},\\
P_{3}&=\begin{bmatrix}
 2.5458 & -0.2577 & -0.4802 & -0.8349\\
 -0.2577 & 6.4729 & -0.6880 & -2.6528\\
 -0.4802 & -0.6880 & 1.1266 & 0.1856\\
 -0.8349 & -2.6528 & 0.1856 & 3.6621
\end{bmatrix},
\end{aligned}
$}
\end{equation*}
and the default controller gains
\begin{equation*}
\resizebox{0.68\hsize}{!}{$
\begin{aligned}
K_{1}&=\begin{bmatrix}
 -0.5485 & 1.4088 & 0.1241 & -1.1827\\
 0.5403 & -1.9166 & 0.1856 & 0.6356
\end{bmatrix},\\
K_{2}&=\begin{bmatrix}
 -0.9037 & 1.9401 & 0.5662 & -2.1235\\
 1.3998 & 0.0509 & -1.5069 & -1.0168
\end{bmatrix},\\
K_{3}&=\begin{bmatrix}
 1.6078 & -2.8021 & -0.4213 & 0.8067\\
 1.0113 & -0.6582 & 0.9367 & -1.8909
\end{bmatrix}.
\end{aligned}
$}
\end{equation*}

When detecting attack flow $\tilde{R}(k)$ at time $k$, the bandwidth allocation $W(k)$ and controller gain $K_{i}(k)$ are optimized jointly online by solving (\ref{PPLS-opt}), otherwise the default controller gain is adopted. The trajectory of bandwidth $W(k)$ against input flow $R+\tilde{R}(k)$ is presented in Fig.~\ref{PPLS_bandwidth_flow}. Noticing the data at $k=1,5,10$, under identical attack flow $\tilde{R}(k)=[5,5,5,5]^{\mathrm{T}}$, channels 3 and 4 are allocated with enough bandwidth preferentially for subsystem 1, channels 2 and 4 are preferred for subsystem 2, and channels 1 and 2 are preferred for subsystem 3. This result indicates the adaptivity of the strategy to not only attack flow but also system dynamics.

\begin{figure}[htb]
   \centering
   \includegraphics[width=.5\textwidth]{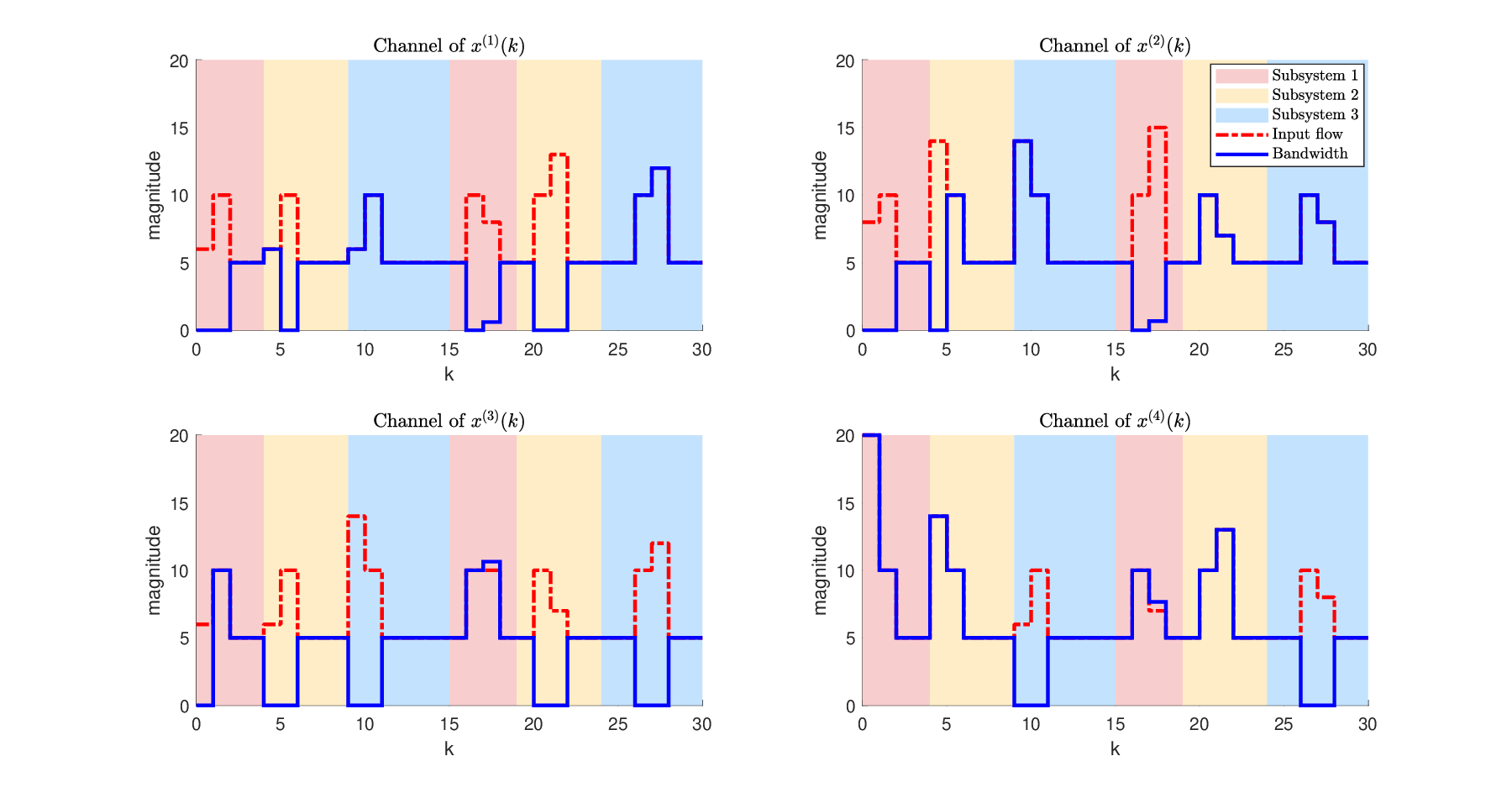}
   \caption{Trajectory of bandwidth against input flow}
   \label{PPLS_bandwidth_flow}
\end{figure}

By solving (\ref{PPLS-opt-bi}) with SEA, we have $\bar{\beta}_{1}=1.5038$, $\bar{\beta}_{2}=3.1578$ and $\bar{\beta}_{3}=3.4006$. Then we can obtain that system $\Sigma$ is 0.9520-exponentially stable with Theorem \ref{PPLS-theo-stability}. It is consistent with the closed-loop state response shown in Fig.~\ref{PPLS_closed}.

To highlight the advantage of our cross-layered strategy, two additional strategies are designed for comparison. In Strategy A, given bandwidth allocation $W^{(j)}(k)=R^{(j)}$ ($j=1,2,3,4$), only the controller gain is regulated in Problem (\ref{PPLS-opt}), the state response is shown in Fig.~\ref{PPLS_closed_onlyK}. In Strategy B, given controller gain from Lemma \ref{PPLS-lemma-alpha}, only bandwidth allocation is regulated in Problem (\ref{PPLS-opt}), the state response is shown in Fig.~\ref{PPLS_closed_onlyW}. Compared with Fig.~\ref{PPLS_closed}, it can be seen that the cross-layered strategy has advantage in transient performance with reduced overshoot and oscillation.

\begin{figure}[htb]
   \centering
   \includegraphics[width=.43\textwidth]{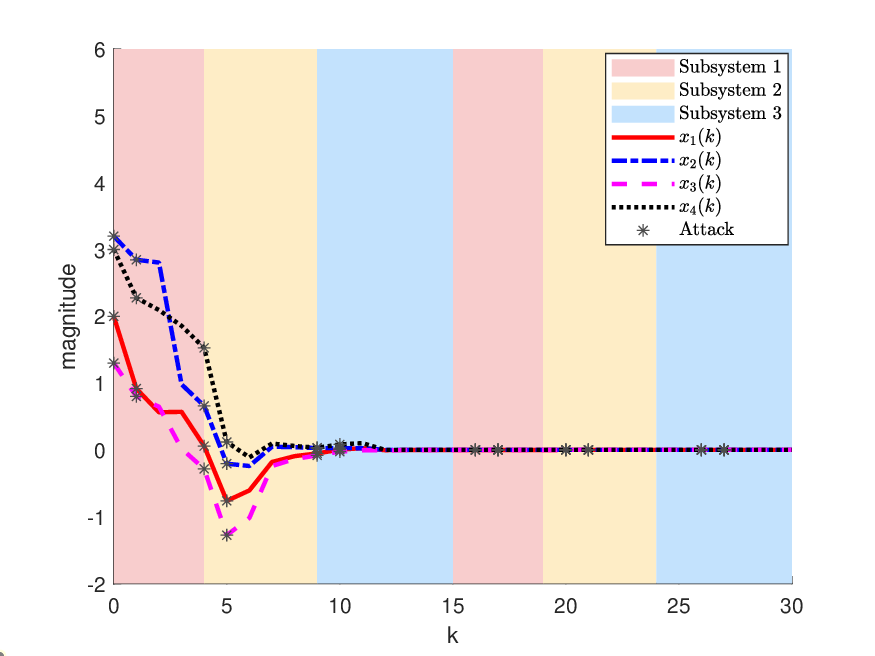}
   \caption{State response under cross-layered strategy}
   \label{PPLS_closed}
\end{figure}

\begin{figure}[htb]
   \centering
   \includegraphics[width=.43\textwidth]{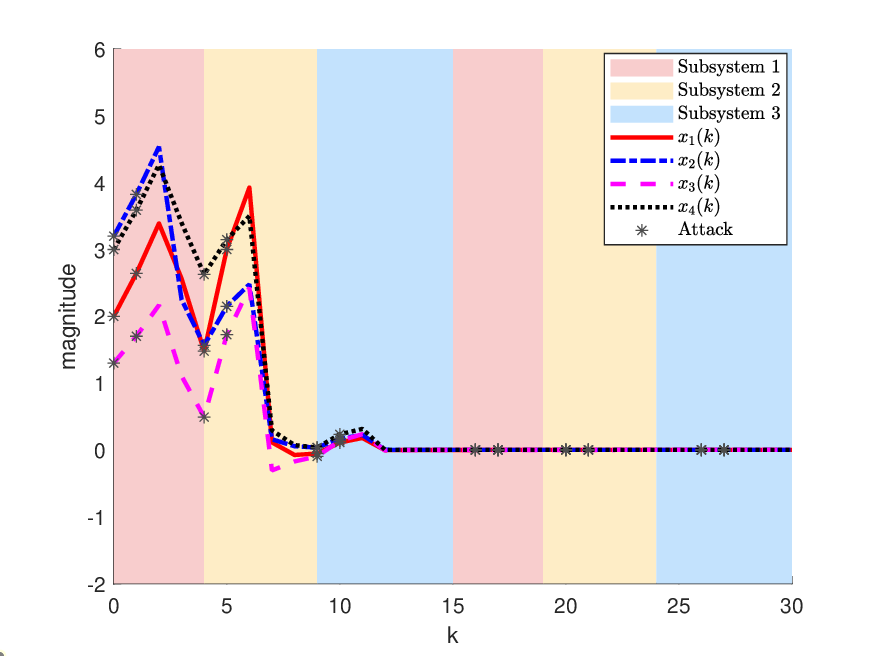}
   \caption{State response under Strategy A (under only controller gain regulation)}
   \label{PPLS_closed_onlyK}
\end{figure}

\begin{figure}[htb]
   \centering
   \includegraphics[width=.43\textwidth]{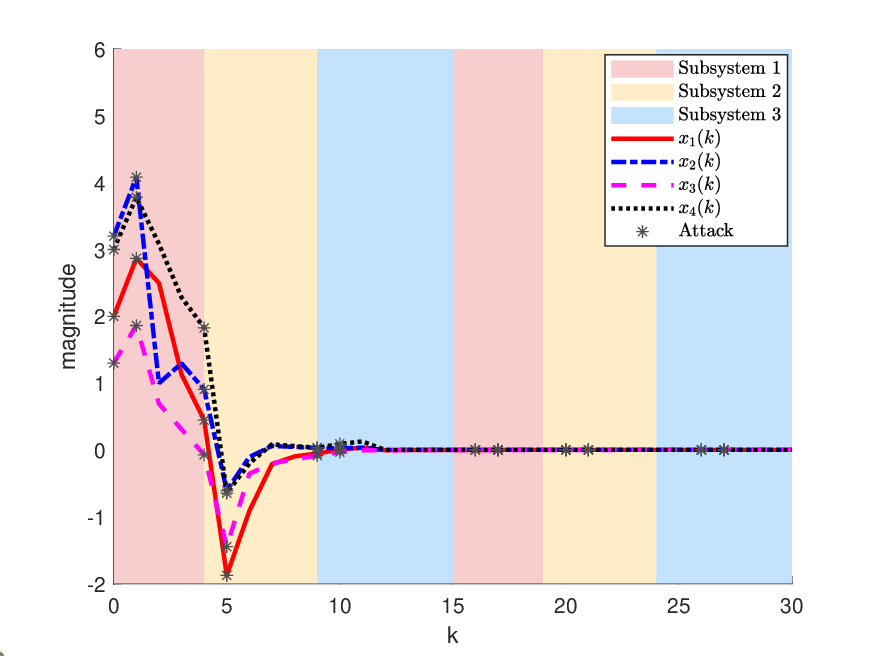}
   \caption{State response under Strategy B (under only bandwidth allocation regulation)}
   \label{PPLS_closed_onlyW}
\end{figure}

To illustrate how to employ SEA to obtain $\bar{\beta}_{i}$ for (\ref{PPLS-opt-bi}), taking subsystem 1 as an example, we present the detailed intermediate results as follows. In \textbf{Steps 1,\:2}, by solving (\ref{PPLS-opt}) with given $L_{1}$ from each $\mathcal{L}\in \mathcal{S}_{all}$ one by one, we have the optimal exponential order $\beta^{*}_{1}\mid\mathcal{L}$ for each channels' state $\mathcal{L}$, which are shown in Table \ref{tab-S-all}.
\begin{table*}
\small
\centering
\caption{Optimal exponential orders of channels' states}
\label{tab-S-all}
\begin{tabular}{ccccccccc}
\toprule
$\mathcal{L}\in \mathcal{S}_{all}$ & [1\:1\:1\:1] & [1\:0\:1\:1] & [0\:1\:1\:1] & [0\:0\:1\:1] & [1\:1\:0\:1] & [0\:1\:0\:1] & [1\:0\:0\:1] & [0\:0\:0\:1]\\ \midrule
$\beta_{1}^{*}\mid\mathcal{L}$ & 1.2689 & 1.2689 & 1.2689 & 1.2689 & 1.3737 & 1.3926 & 1.4258 & 1.4275\\
\toprule
$\mathcal{L}\in \mathcal{S}_{all}$ & [1\:1\:1\:0] & [0\:1\:1\:0] & [1\:1\:0\:0] & [0\:1\:0\:0] & [1\:0\:1\:0] & [1\:0\:0\:0] & [0\:0\:1\:0] & [0\:0\:0\:0]\\ \midrule
$\beta_{1}^{*}\mid\mathcal{L}$ & 1.5038 & 1.5646 & 1.6701 & 1.7068 & 1.8282 & 1.9140 & 2.0299 & 2.0661\\
\toprule
\end{tabular}
\end{table*}

In \textbf{Steps 3,\:4}, according to Table \ref{tab-S-all}, with (\ref{S-force}) and (\ref{L-force}), we can determine $\mathcal{S}_{force}$, which denotes the set of the channels' states that the attacker can reach forcibly by violating the delay condition $\big(R^{(j)}+\tilde{R}^{(j)}\big)\tau< S^{(j)}$ in (\ref{PPLS-opt-bi-logic}). The result is shown in Table \ref{tab-S-force}, where channels $\mathcal{L}_{f}(\mathbf{0})$ are jammed forcibly by violating the delay condition, `?' denotes the channel state dependent on the following bandwidth allocation, and \textbf{the bold entry indicates} $\tilde{\beta}_{1}^{*}=\max\limits_{\mathcal{L}_{f}\in \mathcal{S}_{force}}\left\{\beta_{1}^{*}\mid\mathcal{L}_{f}\right\}$.
\begin{table}[H]
\footnotesize
\centering
\caption{Optimal exponential orders of $\mathcal{S}_{force}$}
\label{tab-S-force}
\begin{tabular}{cccccc}
\toprule
$\mathcal{L}_{f}\in \mathcal{S}_{force}$ & \text{[0\:?\:?\:?]} & \text{[?\:0\:?\:?]} & \text{[?\:?\:0\:?]} & \text{[?\:?\:?\:0]} & \text{[?\:?\:?\:?]}\\ \midrule
$\beta_{1}^{*}\mid \mathcal{L}_{f}$ & 1.2689 & 1.2689 & 1.3737 & \textbf{1.5038} & 1.2689\\
\toprule
\end{tabular}
\end{table}

In \textbf{Steps 5,\:6}, with (\ref{S-safe}) and (\ref{L-safe}), we have $\mathcal{S}_{safe}\mid\mathcal{L}_{f}$, which denotes the set of the channels' states that the defender can guarantee by allocating bandwidth under a given $\mathcal{L}_{f}$, that is, when the attack flow jams channels $\mathcal{L}_{f}(\mathbf{0})$ by violating the delay condition. The detailed result is shown in Table \ref{tab-S-safe}, where `--' denotes the entry which can take the value of $0$ or $1$. For example, [0\:--] denotes \{[0\:0],\:[0\:1]\}. With Table \ref{tab-S-all} and $\hat{\beta}_{i}^{*}\mid\mathcal{L}_{f}=\min\limits_{\mathcal{L}_{s}\in\mathcal{S}_{safe}\mid\mathcal{L}_{f}} \left\{\beta_{i}^{*}\mid\mathcal{L}_{s}\right\}$ defined in \textbf{Step 5}, one has
\begin{equation*}
\begin{aligned}
\hat{\beta}_{1}^{*}\mid \text{[0\:?\:?\:?]}&=\beta_{1}^{*}\mid \text{[0\:1\:1\:1]}=1.2689,\\
\hat{\beta}_{1}^{*}\mid \text{[?\:0\:?\:?]}&=\beta_{1}^{*}\mid \text{[1\:0\:1\:1]}=1.2689,\\
\hat{\beta}_{1}^{*}\mid \text{[?\:?\:0\:?]}&=\beta_{1}^{*}\mid \text{[1\:1\:0\:1]}=1.3737,\\
\hat{\beta}_{1}^{*}\mid \text{[?\:?\:?\:0]}&=\beta_{1}^{*}\mid \text{[1\:1\:1\:0]}=1.5038.
\end{aligned}
\end{equation*}
Especially since $\mathcal{S}_{safe}\mid \text{[?\:?\:?\:?]}=\text{[0\:0\:0\:0]}$, one has
\begin{equation*}
\hat{\beta}_{1}^{*}\mid \text{[?\:?\:?\:?]}=\beta_{1}^{*}\mid\text{[0\:0\:0\:0]}=2.0661.
\end{equation*}

Having determined $\tilde{\beta}_{1}^{*}$ and $\hat{\beta}_{1}^{*}\mid \mathcal{L}_{f}$, one has $\mathcal{S}_{opt}\mid \mathcal{L}_{f}$ according to (\ref{S-opt-Lf}). In detail, one has
\begin{equation*}
\begin{aligned}
&\mathcal{S}_{opt}\mid \text{[0\:?\:?\:?]}\\
=&\{\mathcal{L}_{o}\in \mathcal{S}_{all}\mid \text{[0\:?\:?\:?]}:1.5038\leq \beta_{1}^{*}\mid \mathcal{L}_{o}\leq 1.2689\}=
\emptyset.
\end{aligned}
\end{equation*}
It implies that the global optimum $\tilde{R}$ of Problem (\ref{PPLS-opt-bi}) should not be under $\mathcal{L}_{f}=\text{[0\:?\:?\:?]}$, that is, the global optimum $\tilde{R}$ should not jam channel 1 forcibly by violating the delay condition. Hence, we can let $\bar{\beta}_{1}\mid \text{[0\:?\:?\:?]}=-\infty$. Similarly, we let $\bar{\beta}_{1}\mid \text{[?\:0\:?\:?]}=-\infty$ and $\bar{\beta}_{1}\mid \text{[?\:?\:0\:?]}=-\infty$, since $\mathcal{S}_{opt}\mid \text{[?\:0\:?\:?]}=\emptyset$ and $\mathcal{S}_{opt}\mid \text{[?\:?\:0\:?]}=\emptyset$. On the other hand, one has
\begin{equation*}
\begin{aligned}
&\mathcal{S}_{opt}\mid \text{[?\:?\:?\:0]}\\
=&\{\mathcal{L}_{o}\in \mathcal{S}_{all}\mid \text{[?\:?\:?\:0]}:1.5038\leq \beta_{1}^{*}\mid \mathcal{L}_{o}\leq 1.5038\}\\
=&\text{[1\:1\:1\:0]},
\end{aligned}
\end{equation*}
and
\begin{equation*}
\begin{aligned}
&\mathcal{S}_{opt}\mid \text{[?\:?\:?\:?]}\\
=&\{\mathcal{L}_{o}\in \mathcal{S}_{all}\mid \text{[?\:?\:?\:?]}:1.5038\leq \beta_{1}^{*}\mid \mathcal{L}_{o}\leq 2.0661\},
\end{aligned}
\end{equation*}
of which the elements are listed in the 3rd row of Table \ref{tab-S-all}. By running \textbf{Step 6} for $\mathcal{S}_{opt}\mid\text{[?\:?\:?\:0]}$ and $\mathcal{S}_{opt}\mid\text{[?\:?\:?\:?]}$ respectively, one has $\bar{\beta}_{1}\mid\text{[?\:?\:?\:0]}=1.5038$ and $\bar{\beta}_{1}\mid\text{[?\:?\:?\:?]}=1.5038$.

\begin{table}[H]
\footnotesize
\centering
\caption{Elements of $\mathcal{S}_{safe}\mid\mathcal{L}_{f}$}
\label{tab-S-safe}
\begin{tabular}{cccccc}
\toprule
$\mathcal{L}_{f}$ & \text{[0\:?\:?\:?]} & \text{[?\:0\:?\:?]} & \text{[?\:?\:0\:?]} & \text{[?\:?\:?\:0]} & \text{[?\:?\:?\:?]}\\ \midrule
$\mathcal{S}_{safe}\mid\mathcal{L}_{f}$ & \text{[0\:--\:--\:--]} & \text{[--\:0\:--\:--]} & \text{[--\:--\:0\:--]} & \text{[--\:--\:--\:0]} & \text{[0\:0\:0\:0]}\\
\toprule
\end{tabular}
\end{table}

In \textbf{Step 7}, we can obtain $\bar{\beta}_{1}=\max\limits_{\mathcal{L}_{f}\in \mathcal{S}_{force}}\left\{\bar{\beta}_{1}\mid\mathcal{L}_{f}\right\}=1.5038$. With regard to computational complexity, it can be seen that after the filtering in \textbf{Steps 4,\:5}, the total number of elements in $\mathcal{S}_{opt}\mid \mathcal{L}_{f}$ ($\mathcal{L}_{f}\in \mathcal{S}_{force}$) decreases from $4\times 2^{3}+2^{4}$ to $1+8$, where without the filtering of \textbf{Steps 4,\:5}, the numbers of the elements in $\mathcal{S}_{opt}\mid \text{[0\:?\:?\:?]}$, $\mathcal{S}_{opt}\mid \text{[?\:0\:?\:?]}$, $\mathcal{S}_{opt}\mid \text{[?\:?\:0\:?]}$, $\mathcal{S}_{opt}\mid \text{[?\:?\:?\:0]}$ are $2^{3}$ identically, and the number of the elements in $\mathcal{S}_{opt}\mid \text{[?\:?\:?\:?]}$ is $2^{4}$. This reduction illustrates the developed efficiency of SEA.

\section{Conclusion}
In this paper, an online cross-layered defense strategy has been proposed for multi-channel PPLSs under DoS attacks. The bandwidth allocation and controller gain are optimized jointly according to the real-time attack flow and system dynamics, by solving a mixed-integer semidefinite programming online. Furthermore, the stability of the system under the proposed strategy has been analyzed, by solving a bi-level mixed-integer semidefinite programming. To solve this non-convex bi-level programming, a smart enumeration algorithm has been proposed with reduced computation workload. The effectiveness of the strategy has been illustrated by numerical examples.

\section*{References}
\bibliographystyle{IEEEtran}
\bibliography{ref}

\end{document}